\documentclass[sigconf]{acmart}

\copyrightyear{2025}
\acmYear{2025}
\setcopyright{rightsretained}
\acmConference[PPDP '25]{Proceedings of the 27th International Symposium on Principles and Practice of Declarative Programming}{September 10--11, 2025}{Rende, Italy}
\acmBooktitle{Proceedings of the 27th International Symposium on Principles and Practice of Declarative Programming (PPDP '25), September 10--11, 2025, Rende, Italy}
\acmDOI{10.1145/3756907.3756919}
\acmISBN{979-8-4007-2085-7/25/09}

\newcommand{\seq}[5][\mathcal{N}]{#1;#2;#3;#4\longrightarrow #5}
\newcommand{\logic}{$\mathcal{L}_{LF}$}

\newcommand{\arr}[2]{#1\rightarrow #2}

\newcommand{\ie}{{\it i.e.}}
\newcommand{\eg}{{\it e.g.}}

\newcommand{\etal}{{et al.}}
\newcommand{\rowspace}{\vspace{5pt}}

\newcommand{\app}{\ }

\newcommand{\lflam}[2]{\lambda {#1}.{\mkern 3mu} #2}
\newcommand{\of}{:}
\newcommand{\typeof}[2]{#1\mathrel{:}#2}
\newcommand{\typedpi}[3]{\Pi {#1}{:}{#2}.{\mkern 3mu} #3}
\newcommand{\emptysig}{\cdot}
\newcommand{\emptyctx}{\cdot}

\newcommand{\tm}{\texttt{tm}}
\newcommand{\plussym}{\texttt{plus}}
\newcommand{\plusz}{\texttt{plus-z}}
\newcommand{\pluss}{\texttt{plus-s}}
\newcommand{\plus}[3]{\plussym\ #1\ #2\ #3}
\newcommand{\tmapp}{\texttt{app}}
\newcommand{\tmlam}{\texttt{lam}}
\newcommand{\nat}{\texttt{nat}}
\newcommand{\z}{\texttt{z}}
\newcommand{\s}{\texttt{s}}
\newcommand{\sizesym}{\texttt{size}}
\newcommand{\sizeapp}{\texttt{size-app}}
\newcommand{\sizelam}{\texttt{size-lam}}
\newcommand{\size}[2]{\sizesym\app#1\app#2}
\newcommand{\tpof}[2]{\texttt{type-of}\ #1\ #2}

\newcommand{\lftype}{\mbox{Type}}
\newcommand{\lfwftype}[2]{#1 \vdash_{\Sigma} #2\ \mbox{\tt type}}
\newcommand{\lfwfkind}[2]{#1\ \vdash_{\Sigma} #2\ \mbox{\tt kind}}
\newcommand{\lfwfctx}[1]{\vdash_{\Sigma} #1\ \mbox{\tt ctx}}
\newcommand{\lfwfsig}[1]{\vdash{} #1\ \mathtt{sig}}

\newcommand{\lfprove}[3][\Sigma]{#2 \vdash_{#1} #3}
\newcommand{\lfsynthkind}[4][\Sigma]{\lfprove[#1]{#2}{#3\Rightarrow #4}}
\newcommand{\lfsynthtype}[4][\Sigma]{\lfprove[#1]{#2}{#3\Rightarrow #4}}
\newcommand{\lfchecktype}[4][\Sigma]{\lfprove[#1]{#2}{#3\Leftarrow #4}}

\newcommand{\subord}{\preceq}
\newcommand{\subordinates}[2]{#1 \subord #2}
\newcommand{\tfsubord}[3]{#1 \preceq_{#3} #2}
\newcommand{\nottfsubord}[3]{#1 {\not \preceq}_{#3} #2}

\newcommand{\subordmin}[1]{\vert^{\subord}_{#1}}
\newcommand{\headof}[1]{\vert {#1} \vert}
\newcommand{\ctxmin}[3][a]{#2\subordmin{#1} = #3}

\newcommand{\illformvalidant}[2]{|\!|#2|\!|_{#1} = \mbox{\bf f}}
\newcommand{\illformvalidcons}[2]{|\!|#2|\!|_{#1} = \mbox{\bf t}}
\newcommand{\schemaschemarel}[2]{\sqsupseteq^{#1}_{#2}}
\newcommand{\blockschemarel}{\in}

\newcommand{\prefix}{\trianglelefteq}

\newcommand{\ctxsimfmla}[4][\Gamma]{#2 \gg^{#1}_{#4} #3}
\newcommand{\cesub}[4]{\ctxsimfmla[#1]{#2}{#3}{#4}}

\newcommand{\blockschemarelformula}[4][\Gamma]{#2 \blockschemarel^{#1}_{#3} #4}
\newcommand{\schemarelformula}[4][\Gamma]{#2 \schemaschemarel{#1}{#3} #4}

\newcommand{\prefixof}[3][]{#2 \prefix_{#1} #3}

\newcommand{\blockwftransport}[3][\mathcal{C}]{\prefixof[#1]{#2}{#3}}
\newcommand{\ctxwftransport}[3][\mathcal{C}]{\prefixof[#1]{#2}{#3}}

\newcommand{\subst}[2]{#1[#2]}
\newcommand{\asub}[2]{#1 \llbracket #2 \rrbracket}
\newcommand{\ctxsub}[2]{#1 \lbrack #2 \rbrack}
\newcommand{\asubentry}[3]{\langle {#1}, {#2}, {#3} \rangle}
\newcommand{\actxsub}[3]{\ctxsub{\asub{#1}{#2}}{#3}}
\newcommand{\emptyce}{\cdot}
\newcommand{\erase}[1]{{(#1)}^{-}}
\newcommand{\emptybb}{\cdot}
\newcommand{\declinst}[4]{#2 \leadsto_{\mbox{\sl \scriptsize dec}} #3 \bowtie #4}
\newcommand{\bsinst}[3]{#2 \leadsto_{\mbox{\sl \scriptsize bs}} #3}
\newcommand{\csinst}[3]{#2 \leadsto_{\mbox{\sl \scriptsize cs}} #3}
\newcommand{\csinstone}[3]{#2 \leadsto^1_{\mbox{\sl \scriptsize cs}} #3}
\newcommand{\stlcder}{\vdash_{\mbox{\sl \scriptsize at}}}
\newcommand{\stlctyjudg}[3]{#1 \stlcder #2 : #3}

\newcommand{\STLCGamma}{\Theta}
\newcommand{\initctx}{\Sigma^-}

\newcommand{\acstyping}[1]{\vdash #1\ \mbox{\sl ctx schema}}
\newcommand{\abstyping}[1]{\vdash #1\ \mbox{\sl blk schema}}

\newcommand{\fatm}[3]{\{#1 \vdash \typeof{#2}{#3}\}}
\newcommand{\fconjunct}[2]{#1\wedge #2}
\newcommand{\fdisjunct}[2]{#1\vee #2}
\newcommand{\fimp}[2]{#1\supset #2}
\newcommand{\fall}[2]{\forall #1. #2}
\newcommand{\fexists}[2]{\exists #1. #2}
\newcommand{\genericq}{\mathcal{Q}}
\newcommand{\fgeneric}[2]{\genericq #1. #2}
\newcommand{\fctx}[3]{\Pi\,#1 : #2.#3}

\newcommand{\ftrue}{\top}
\newcommand{\ffalse}{\bot}

\newcommand{\noms}{\mathcal{N}}
\newcommand{\declder}{\vdash_{\mbox{\sl \scriptsize dec}}}
\newcommand{\kindingder}{\vdash_{\mbox{\sl \scriptsize ak}}}
\newcommand{\wfdecls}[3]{#1 \declder #2 \Rightarrow #3}
\newcommand{\wftype}[2]{#1 \kindingder #2\ \mbox{\sl type}}
\newcommand{\emptycs}{\cdot}
\newcommand{\oty}{o}

\newcommand{\permsubs}[2]{\hat{#1}_{#2}}

\newcommand{\apppsubsbd}[3]{\asub{#3}{#1,#2}}
\newcommand{\blkctx}[1]{\mbox{\sl ctx}(#1)}
\newcommand{\ctx}[1]{\mbox{\sl ctx}(#1)}

\DeclareRobustCommand{\Infer}[3][]{%
  \dfrac{\begin{array}{@{}c@{}}\textstyle #3\end{array}}{\textstyle #2}%
  \if\relax\detokenize{#1}\relax\else\ \mathsf{#1}\fi
}

\begin{document}

\title[Transporting Theorems about Typeability in LF Across
      Schematically Defined Contexts]
      {Transporting Theorems about Typeability in LF \\
       Across Schematically Defined Contexts}
\author{Chase Johnson}
\email{joh13266@umn.edu}
\affiliation{%
  \institution{University of Minnesota}
  \city{Minneapolis}
  \state{MN}
  \country{USA}
}
\author{Gopalan Nadathur}
\email{ngopalan@umn.edu}
\affiliation{%
  \institution{University of Minnesota}
  \city{Minneapolis}
  \state{MN}
  \country{USA}
}
\begin{CCSXML}
  <ccs2012>
     <concept>
         <concept_id>10003752.10003790.10011740</concept_id>
         <concept_desc>Theory of computation~Type theory</concept_desc>
         <concept_significance>500</concept_significance>
         </concept>
     <concept>
         <concept_id>10003752.10003790.10002990</concept_id>
         <concept_desc>Theory of computation~Logic and verification</concept_desc>
         <concept_significance>500</concept_significance>
         </concept>
     <concept>
         <concept_id>10003752.10003790.10003792</concept_id>
         <concept_desc>Theory of computation~Proof theory</concept_desc>
         <concept_significance>500</concept_significance>
         </concept>
     <concept>
         <concept_id>10003752.10003790.10003794</concept_id>
         <concept_desc>Theory of computation~Automated reasoning</concept_desc>
         <concept_significance>300</concept_significance>
     </concept>
  </ccs2012>
\end{CCSXML}

\ccsdesc[500]{Theory of computation~Type theory}
\ccsdesc[500]{Theory of computation~Logic and verification}
\ccsdesc[500]{Theory of computation~Proof theory}
\ccsdesc[300]{Theory of computation~Automated reasoning}

\keywords{dependently typed lambda calculi, formalizing properties of typing judgements, equivalence of typing contexts, type subordination}

\begin{abstract}
  The dependently-typed lambda calculus LF is often used as a vehicle
  for formalizing rule-based descriptions of object systems.
  Proving properties of object systems encoded in this fashion
  requires reasoning about formulas over LF typing judgements.
  An important characteristic of LF is that it supports a
  higher-order abstract syntax representation of binding structure.
  When such an encoding is used, the typing judgements include
  contexts that assign types to bound variables and formulas must
  therefore allow for quantification over contexts.
  The possible instantiations of such quantifiers are usually governed
  by schematic descriptions that must also be made explicit for
  effectiveness in reasoning.
  In practical reasoning tasks, it is often necessary to transport
  theorems involving universal quantification over contexts satisfying
  one schematic description to those satisfying another
  description.
  We provide here a logical justification for this ability.
  Towards this end, we utilize the logic \logic{}, which has
  previously been designed for formalizing properties of LF
  specifications.
  We develop a transportation proof rule and show it to be sound
  relative to the semantics of \logic{}.
  Key to this proof rule is a notion of context schema subsumption that
  uses the subordination relation between types as a means for
  determining the equivalence of contexts relative to individual LF
  typing judgements.
  We discuss the incorporation of this rule into the Adelfa proof
  assistant and its use in actual reasoning examples.
\end{abstract}

\maketitle

\section{Introduction}

The Edinburgh Logical Framework, also known as LF or the
$\lambda\Pi$-calculus, has been used as a vehicle for formalizing
rule-based relational specifications.
When used in this mode, the dependent types of LF provide a natural
means for encoding relations, and terms correspond to derivations for
such relations.
Typing judgements then validate such derivations.

An important feature of LF as a formalization tool is
its support for what is often referred to as \emph{higher-order
abstract syntax}~\cite{pfenning88pldi}.
In this style of representation, binding in the constructs of an
object-language are encoded using abstraction in the meta-language,
which, in this case, is LF.
When such a representation is used, the specification of relations
that involve recursion over the structure of object language
expressions give rise naturally to typing judgements in LF with
non-empty typing contexts.
As a concrete example, consider the size relation between a lambda
term and a natural number.
Under a higher-order abstract syntax encoding in LF, this relation
for an abstraction term will be based on one corresponding to its
body but in a typing context that binds the ``free'' variable in the
body and that simultaneously assigns it the unit size.\footnote{We
will consider this example in more detail in Section~\ref{ssec:lf},
where this intuitive picture will become more precise.}
An important point to note is that the typing contexts that manifest
themselves in typing judgements that arise from such specifications
satisfy a regular pattern.
Thus, in the example considered, they comprise a sequence of a pair of
bindings, each of which identifies a term variable and associates
a size with it.

The broad concern in this paper is with reasoning about specifications
that are encoded in LF in the manner described.
A problem that arises in this context is that of lifting a
property that has been established relative to a typing context that
exhibits one kind of regularity to a situation where it possesses a
different structure.
To illustrate this phenomenon, suppose that we are interested in
showing that every lambda term has an associated size.
Observing that the size of an application term would be obtained by
the addition of the sizes of the two subterms, we see that proving
the property of interest relies crucially on the existence of a
natural number that is the sum of two given natural numbers.
Now, this is a property that is of general interest with regard to
natural numbers and is likely to have been established in a library
that formalizes them.
Since the structure of natural numbers is devoid of binding notions,
the most sensible setting for proving the property would be that where
the typing context is empty.
However, in the example under consideration, we need to use the
property in a situation where the context may be non-empty but has the
structure determined by the definition of the size relation.

It is this kind of lifting of theorems concerning LF
specifications that we refer to as ``transportation,'' and it is this
problem that we study in this paper.
An important part of building such a capability is understanding when
one context can be replaced by another without impacting the validity
of a formula.
A basis for doing so is provided by the ``dependence
relation'' between types that was introduced by Roberto
Virga~\cite{virga99phd} and that is now commonly referred to as the
\emph{subordination} relation~\cite{harper07jfp}.
The negation of this relation allows us to determine when an object of
a particular type may not influence the construction of an object of
another type and, thereby, when a variable of the first type may be
dropped from a context without impacting the validity of the typing
judgement.
However, this idea directly justifies the replacement of one context
by another only in \emph{individual} typing judgements.
To be applicable in our situation, this analysis must be extended to the
setting of properties that relate varied typing judgements.
An additional complexity is that we must consider not just
the replacement of single contexts but, rather, the swapping of
\emph{families of contexts} that are determined by the schematic
descriptions governing the quantification of a context variable.
Our primary endeavor in this work is to describe and
formally justify a mechanism that addresses the two mentioned issues
and that thereby provides a sound basis for theorem
transportation.\footnote{A solution to these issues that
uses subordination more or less directly has been described in a
situation where LF is used also to encode metatheorems about LF
specifications~\cite{harper07jfp}.
This solution relies on the particular, limited form for stating
metatheorems in the underlying approach.
Our goal here is to address the problem in a more general setting.
We discuss this matter again in Section~\ref{sec:conclusion}.
}

To consider the problem in a mathematically precise fashion,
we cast it within a logic for reasoning about LF specifications.
The particular logic we use for this purpose is called
\logic{}~\cite{nadathur22ppdp}.
The atomic formulas in this logic correspond to typing judgements in
LF and they are interpreted by the typing rules of the calculus.
Properties about typing judgements can be expressed using a collection
of logical connectives and quantifiers.
The quantifiers in the logic range over both LF terms and
contexts.
The meaningful description of properties that involve quantification
over contexts requires the domain of such quantifiers to be properly
controlled.
As we have observed in our example, a clue to what the relevant
domain for particular such quantifiers may be is provided by the
structure of the typing judgements involved.
Adapting the idea of regular worlds from the Twelf
system~\cite{Pfenning02guide,schurmann00phd}, \logic{} introduces the
notion of a \emph{context schema} as a type for context variables to
capture this kind of structure.
The question that we are interested in can then  be formalized as the
following: Can we describe circumstances under which the validity of a
formula that involves a universal quantification over a context variable
governed by one context schema implies the validity of the same formula
but with the context variable quantification being governed by a
different context schema?
We answer this question by developing and utilizing the notion of
\emph{context schema subsumption}.

The logic \logic{}, which is initially described via a classical,
subst\-itution-based semantics, has been complemented by a proof
system~\cite{nadathur21arxiv} that has been implemented in the
Adelfa proof assistant~\cite{southern21lfmtp}.
The notion of \emph{context schema subsumption} is brought to fruition
in this setting through the enunciation of a sound proof rule to
support theorem transportation.
We discuss how this proof rule can be incorporated into the Adelfa
system and we illustrate its use in this form through an example.
More details concerning this example and a few others, including
complete proof developments, can be found at the URL
\href{https://adelfa-prover.org/schema-subsumption}{https://adelfa-prover.org/schema-subsumption}.

The rest of this paper is structured as follows.
The next two sections provide an overview of the aspects of LF and
\logic{} that are needed in subsequent discussions.
Section~\ref{sec:subsumption} then develops the idea of context schema
subsumption and establishes its properties.
Section~\ref{sec:rule-and-soundness} uses this idea to describe the
transportation proof rule.
Section~\ref{sec:applications} discusses the integration of the proof
rule into Adelfa and provides some examples of its use.
We conclude the paper with a consideration of related work.

\section{The Edinburgh Logical Framework}\label{sec:lf}

This section has a twofold purpose.
First, it introduces the particular version of LF that will be
used in this paper.
Second, it identifies the relation of subordination between types that
provides the basis for pruning an LF context down to the part that
is relevant to a typing judgement.

\subsection{LF Syntax and Formation Judgements}\label{ssec:lf}

Standard presentations of LF permit lambda terms that are not in
normal form; \eg, see \cite{harper93jacm}.
Such a presentation simplifies the treatment of substitution, but it
complicates arguments concerning adequacy and derivability of typing
judgements that arise when the calculus is used to represent
object systems.
In light of this, we use a version known as \emph{canonical
  LF}~\cite{harper07jfp,watkins03tr} that only admits terms
that are in $\beta$-normal form and where well-typing additionally
requires that they be in $\eta$-long form.
We will refer to this version simply as LF.

\begin{figure}[htpb]
  \[
    \begin{array}{r r c l}
      \mbox{\bf Kinds}                   & K      & ::= & \lftype\ |\ \typedpi{x}{A}{K} \\[5pt]

      \mbox{\bf Canonical Type Families} & A,B    & ::= &
      P\ |\ \typedpi{x}{A}{B}                                                           \\
      \mbox{\bf Atomic Type Families}    & P      & ::= & a\ |\ P\app M                 \\[5pt]
      \mbox{\bf Canonical Terms}         & M,N    & ::= & R\ |\ \lflam{x}{M}            \\
      \mbox{\bf Atomic Terms}            & R      & ::= & c\ |\ x\ |\ R\app M           \\[5pt]
      \mbox{\bf Signatures}              & \Sigma & ::= &
      \emptysig\ |\ \Sigma,c:A\ |\ \Sigma,a:K                                           \\[5pt]
      \mbox{\bf Contexts}                & G      & ::= & \emptyctx\ |\ G,x:A
    \end{array}
  \]
  \caption{The Syntax of LF Expressions}\label{fig:lf-syntax}
  \Description{Described in the text.}
\end{figure}

Expressions in LF belong to three categories: kinds, types, and terms.
Kinds index types and types index terms.
The syntax of these expressions is presented in Figure~\ref{fig:lf-syntax}.
The symbols $c$ and $a$ respectively denote term and type constants
and $x$ denotes a term variable.
Terms and types are divided into canonical forms and atomic forms.
The expressions include two binding forms: $\typedpi{x}{A}{B}$ or
$\typedpi{x}{A}{K}$, which binds the variable $x$ of type $A$ in the
type $B$ or kind $K$, and $\lflam{x}{M}$, which binds $x$ in the term $M$.

\begin{figure*}[hbtp]
  \begin{center}$\displaystyle
      \begin{array}[t]{r l}
        \nat \of   & \lftype               \\
        \z   \of   & \nat                  \\
        \s   \of   & \nat \to \nat         \\\\\\\\
        \tm \of    & \lftype               \\
        \tmapp \of & \tm \to \tm \to \tm   \\
        \tmlam \of & (\tm \to \tm) \to \tm
      \end{array}
      \qquad\qquad
      \begin{array}[t]{r l}
        \plussym \of & \nat \to \nat \to \nat \to \lftype                                                                          \\
        \plusz   \of & \typedpi{N}{\nat}{}\plus{\z}{N}{N}                                                                          \\
        \pluss   \of & \typedpi{N_1}{\nat}{}\typedpi{N_2}{\nat}{}\typedpi{N_3}{\nat}{}                                             \\
                     & \typedpi{D}{\plus{N_1}{N_2}{N_3}}{}                                                                         \\
                     & \plus{(\s\ N_1)}{N_2}{(\s\ N_3)}                                                                            \\\\
        \sizesym\of  & \tm \to \nat \to \lftype                                                                                    \\
        \sizeapp\of  & \typedpi{M_1}{\tm}{}\typedpi{M_2}{\tm}{}\typedpi{N_1}{\nat}{}\typedpi{N_2}{\nat}{}\typedpi{N_3}{\nat}{}     \\
                     & \typedpi{D}{\plus{N_1}{N_2}{N_3}}\size{(\tmapp\ M_1\ M_2)}{(\s\ N_3)}                                       \\
        \sizelam\of  & \typedpi{M}{\tm\to\tm}{}\typedpi{N}{\nat}{}                                                                 \\
                     & \typedpi{D}{\big(\typedpi{x}{\tm}{}\size{x}{(\s\ \z)}\to\size{(M\ x)}{N}\big)}{}\size{(\tmlam\ M)}{(\s\ N)}
      \end{array}$
  \end{center}
  \caption{Encoding Sizes of Untyped Lambda Terms}\label{fig:size-sig}
  \Description{Described in the text.}
\end{figure*}

An important operation on LF expressions that has a bearing even on
typing judgements is that of substitution.
Since the term syntax does not permit $\beta$-redexes, this operation
must normalize terms as it replaces variables.
Care must therefore be exercised to ensure it is a terminating
operation.
Towards this end, substitutions are indexed by arity types that, as we
will see in the next section, can be used to characterize the
functional structure of expressions.
Formally, these are simple types constructed from the atomic type $\oty$
using the function type constructor $\rightarrow$.
A substitution $\theta$ is then a finite set of the form
$\{\asubentry{x_1}{M_1}{\alpha_1}, \ldots,
  \asubentry{x_n}{M_n}{\alpha_n}\}$,
where, for $1 \leq i \leq n$, $x_i$ is a distinct variable, $M_i$ is a
canonical term that is intended to replace the variable, and
$\alpha_i$ is an arity type that governs the replacement.
The application of $\theta$ to an expression $E$ that is a
kind, type, or term corresponds to the replacement of the free
occurrences of the variables $x_i$ by the terms $M_i$ and a subsequent
normalization mediated by the arity type $\alpha_i$;
we refer the reader to~\cite{nadathur21arxiv} for the details of this
operation.
Its application is not always guaranteed to be successful.
However, in the situations that we use it here, it will always yield a
result, which we will denote by $\asub{E}{\theta}$.\footnote{The
notion of substitution we use here is that developed in Section 2.1.2
of \cite{nadathur21arxiv}, which generalizes the hereditary
substitution of \cite{harper07jfp} to the situation where multiple
variables are replaced at the same time.}

There are seven typing judgements in (canonical) LF:
$\lfwfsig{\Sigma}$
that ensures that the constants declared in a signature are distinct
and their type or kind classifiers are well-formed; $\lfwfctx{\Gamma}$
that ensures that the variables declared in a context are distinct and
their type classifiers are well-formed in the preceding declarations
and well-formed signature $\Sigma$; $\lfwfkind{\Gamma}{K}$ that
determines that a kind $K$ is well-formed with respect to a
well-formed signature and context pair; $\lfwftype{\Gamma}{A}$ and
$\lfsynthkind{\Gamma}{P}{K}$ that check, respectively, the formation
of a canonical and atomic type relative to a well-formed
signature, context and, in the latter case, kind; and
$\lfchecktype{\Gamma}{M}{A}$ and
$\lfsynthtype{\Gamma}{R}{A}$ that ensure, respectively, that a
canonical and atomic term are well-formed with respect to a
well-formed signature, context and canonical type.
We refer the reader to \cite{harper07jfp} or \cite{nadathur22ppdp}
for the rules defining these judgements.
We limit ourselves to two remarks here.
First, the formation rules for canonical types and terms, \ie, the
rules for $\lfsynthkind{\Gamma}{P}{K}$ and
$\lfsynthtype{\Gamma}{R}{A}$ will require the substitution of the
argument term in an LF expression.
This substitution will be indexed by the arity type that is obtained
from the one determined for the argument term through an
``erasure'' operation denoted by $\erase{\cdot}$ and defined as
follows: $\erase{P} = \oty$ and $\erase{\typedpi{x}{A_1}{A_2}} =
  \arr{\erase{A_1}}{\erase{A_2}}$.
Second, all the judgements forms other than $\lfwfsig{\Sigma}$ are
parameterized by a signature that remains unchanged in the course of
their derivation.
In the rest of this paper we will assume a fixed signature that has
been verified to be well-formed at the outset.

In the typical use of LF, an object system is specified through a
signature.
One set of declarations in the signature identify types and
constructors for representing the objects of interest in the system.
Further declarations identify (dependent) types that encode relations
over these objects.
Finally, these types are used to identify constants that encode rules
for deriving the corresponding relations.
Figure~\ref{fig:size-sig}, which presents a set of declarations that
encode the natural numbers and then uses this to describe the size
relation relative to the terms in the untyped lambda calculus,
exemplifies this paradigm.
The encoding of natural numbers and the rules for the addition
relation over such numbers, encoded here by the type constructor
\plussym, follows the expected lines.
The representation of lambda terms---in particular, the representation
of abstractions---makes use of the higher-order abstract syntax
approach.
Thus, an object language term of the form $\lflam{x}{M}$ will be
represented by the LF expression $(\tmlam\app
(\lflam{x}{\overline{M}}))$, where $\overline{M}$ is a representation
of $M$ and the latter abstraction is one in the meta-language, \ie, in
LF.
The encoding of the rule that specifies the size relation for
abstractions in the object language is also interesting to note.
This encoding is realized through the LF constant $\sizelam$.
Using it to construct a term that has the type
$(\size{(\tmlam\ M)}{N})$ relative to an LF typing context $\Gamma$
will require us to construct a term that has the type
$(\size{(M\app x)}{N'})$ relative to a context that enhances $\Gamma$
with the bindings $x\of \tm, y : \size{x}{(\s\app z)}$; if this
construction succeeds and if $N$ can be matched with $(\s\app N')$,
then the overall construction will succeed.
The derivation of the size relation for a closed lambda term
represented by $M$ would correspond under this encoding to showing the
inhabitation of the type $(\size{M}{N})$ for some $N$ relative to the
empty typing context.
It is easy to see now that the typing contexts that arise in the
course of this task will all comprise of repetitions of ``blocks'' of
bindings of the form $x\of \tm, y : \size{x}{(\s\app z)}$.
It is this kind of regularity in the shape of contexts that will need
to be captured in a logic for reasoning about LF specifications.

\subsection{Subordination and Context Redundancies}\label{ssec:subord}

Not all the associations contained in a context may be relevant to a
typing judgement.
An analytical approach to determining which bindings are irrelevant
and hence may be pruned from the context would be useful in
determining context equivalence: if two contexts can be pruned
to the same ``core,'' then the validity of a typing judgement is
invariant under a replacement of one by the other.
The subordination relation between types that was identified in
\cite{virga99phd} provides a means for such pruning.
Definition~\ref{def:subordination} presents this relation.
Use is made here of the operation $\headof{\cdot}$ that identifies the
``head constant'' of a type and that is given as follows: $\headof{(a
    \app M_1\app \ldots\app M_n)}$ is $a$ and
$\headof{\typedpi{x}{A}{B}}$ is $\headof{B}$.

\begin{definition}
  \label{def:subordination}
  The subordination relation induced by a well-formed signature
  $\Sigma$ is the smallest relation between the type constants
  identified by $\Sigma$ that satisfies the following conditions:

  \begin{enumerate}
    \item Index subordination: For all declarations in $\Sigma$ of the form
          \begin{enumerate}
            \item $a \colon \typedpi{x_1}{A_1} \ldots
                    \typedpi{x_n}{A_n}{\lftype}$, it is the case that
                  $\headof{A_i} \subord a$; and
            \item $c \colon \typedpi{x_1}{A_1} \ldots \typedpi{x_n}{A_n}{A}$, it
                  is the case that $\headof{A_i} \subord \headof{A}$
          \end{enumerate}
          for $1 \leq i \leq n$.
    \item Reflexivity: For all $a$ declared in $\Sigma$, $a \subord a$.
    \item Transitivity: If $a_1 \subord a_2$ and $a_2 \subord a_3$ then
          $a_1 \subord a_3$.
  \end{enumerate}
\end{definition}

As an illustration, the subordination relation induced by the
signature in Figure~\ref{fig:size-sig} is the following:
\[
  \begin{array}{c@{\hspace{20pt}}c@{\hspace{20pt}}c}
    \subordinates{\tm}{\tm}           & \subordinates{\tm}{\size}         & \subordinates{\nat}{\nat}         \\
    \subordinates{\nat}{\plussym}     & \subordinates{\nat}{\size}        & \subordinates{\plussym}{\plussym} \\
    \subordinates{\plussym}{\sizesym} & \subordinates{\sizesym}{\sizesym}
  \end{array}
\]
We write $a \not\subord b$ to denote the fact that $a \subord b$ does
not hold.
Subordination is extended from a relation between type constants to
one between types by defining $A \subord B$ to
hold exactly when $\headof{A} \subord \headof{B}$ does.

The intuitive understanding of the subordination relation is that a
term of type $B$ can appear in a term of type $A$ only if $B \subord
  A$.
Thus, we may drop assignments of type $B$ from a context without
impacting the assessment that a term $M$ has type $A$ if $B
  \not\subord A$.
This idea underlies the minimization operation on contexts given by
the rules below:
\begin{center}
  \[
    \Infer{\ctxmin[A]{\emptyce}{\emptyce}}{}
    \qquad
    \Infer{\ctxmin[A]{(\Gamma, x: B)}{(\Gamma', x: B)}}{\ctxmin[A]{\Gamma}{\Gamma'} \qquad B \subord A}
    \qquad
    \Infer{\ctxmin[A]{(\Gamma, x:B)}{\Gamma'}}{\ctxmin[A]{\Gamma}{\Gamma'} \qquad B \not\subord A}
  \]
\end{center}
Although defined as a relation, for any given context $\Gamma$ and type
$A$, there is exactly one $\Gamma'$ such that $\ctxmin[A]{\Gamma}{\Gamma'}$ holds.
By an abuse of notation, we will write $\Gamma \subordmin{A}$ to
denote that $\Gamma'$.

The properties concerning context minimization that are need in this
paper are contained in the following proposition that follows
from the results in~\cite{harper07jfp}.

\begin{proposition}
  \label{thm:trans-canon-form}
  Let $\Gamma$ and $\Gamma'$ be contexts and let $A$ be a type such that
  $\lfwfctx{\Gamma}$ and $\Gamma\subordmin{A} = \Gamma'$. Then the following
  are true:
  \begin{enumerate}
    \item $\lfwfctx{\Gamma'}$ holds.

    \item $\lfwftype{\Gamma}{A}$ holds if and only if $\lfwftype{\Gamma'}{A}$
          holds.

    \item  If $\lfwftype{\Gamma}{A}$ is derivable, then
          $\lfchecktype{\Gamma}{M}{A}$ holds if and only if
          $\lfchecktype{\Gamma'}{M}{A}$ does. This equivalence extends to the
          judgements $\lfsynthtype{\Gamma}{R}{A}$ and
          $\lfsynthtype{\Gamma'}{R}{A}$, as well as $\lfsynthkind{\Gamma}{P}{K}$
          and $\lfsynthkind{\Gamma'}{P}{K}$.
  \end{enumerate}
\end{proposition}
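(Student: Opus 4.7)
\medskip
\noindent\textbf{Proof Proposal.} The plan is to reduce all three parts to two complementary lemmas that are standard consequences of the subordination analysis of \cite{harper07jfp}, and then to perform an induction on the structure of $\Gamma$ to stitch the minimization operation together with these lemmas. The first ingredient is a \emph{strengthening} principle: if $B \not\subord A$ and $\lfwfctx{\Gamma, x:B, \Gamma_0}$ holds, then no well-formed term whose type has head constant below $A$ in the subordination ordering can depend on $x$; consequently, if $\lfchecktype{\Gamma, x:B, \Gamma_0}{M}{A}$ is derivable, then so is $\lfchecktype{\Gamma, \Gamma_0}{M}{A}$, and similarly for the type-formation and atomic-synthesis judgements whose head-type is below $A$. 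The complementary ingredient is \emph{weakening}: extending a well-formed context with a binding whose type has been verified in the resulting context preserves all the formation and checking judgements. Both lemmas follow by a routine induction on the relevant LF derivations, with the subordination bound propagated through the $\Pi$-introduction and application cases using Definition~\ref{def:subordination}(1b).

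With these in hand, I would establish Part~(1) by induction on the length of $\Gamma$. The base case $\Gamma = \emptyctx$ is immediate since $\emptyctx \subordmin{A} = \emptyctx$. For the step case $\Gamma_1, x:B$, the induction hypothesis gives $\lfwfctx{\Gamma_1 \subordmin{A}}$. If $B \not\subord A$, the third minimization rule applies and the conclusion is just the hypothesis. If $B \subord A$, we must show $\lfwftype{\Gamma_1 \subordmin{A}}{B}$; this follows by applying strengthening to the derivation of $\lfwftype{\Gamma_1}{B}$ guaranteed by $\lfwfctx{\Gamma_1, x:B}$, since every type constant occurring in $B$ lies below $B \subord A$ in the subordination order, so bindings pruned by minimization are irrelevant.

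For Part~(2), the forward direction follows from strengthening applied iteratively to the bindings that are dropped in forming $\Gamma \subordmin{A}$, each of which has head type not subordinate to $A$; the reverse direction follows from weakening, reintroducing the dropped bindings in the same order, using Part~(1) to ensure the intermediate contexts remain well-formed. Part~(3) has the same structure: assuming $\lfwftype{\Gamma}{A}$, the forward directions of all three equivalences are instances of strengthening (since the head types of $A$, of any $B$ appearing in an atomic-type synthesis judgement $\lfsynthkind{\Gamma}{P}{K}$ below $A$, and of terms being checked against $A$ are all bounded by $\headof{A}$), and the reverse directions are instances of weakening.

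The step that requires the most care is the strengthening lemma itself, and within it the application cases: when checking $\lfsynthtype{\Gamma}{R\app M}{\asub{B}{\theta}}$, the argument $M$ has a type $A'$ whose head is below $\headof{A}$ by Definition~\ref{def:subordination}(1b), so the induction hypothesis applies to the subderivation for $M$; but one must verify that the erasure-indexed substitution carried through the rule commutes with removal of an irrelevant binding, which is where the single-substitution treatment of \cite{nadathur21arxiv} must be invoked to justify that $\asub{B}{\theta}$ is unchanged by pruning. Once this compatibility is checked, the rest of the induction is mechanical.
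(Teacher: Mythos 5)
The paper does not actually prove this proposition; it is stated as a consequence of the results in \cite{harper07jfp} and left at that. Your proposal reconstructs what is essentially the standard argument from that reference: a subordination-indexed strengthening lemma paired with weakening, followed by an induction that threads these through the minimization rules. The decomposition into parts (1)--(3) and the way you reduce each to the two lemmas is sound, and you correctly identify the delicate points (propagating the subordination bound through application cases via Definition~\ref{def:subordination}(1b), and the compatibility of hereditary substitution with pruning). One detail worth making explicit: when you iterate strengthening to drop the bindings discarded by $\Gamma\subordmin{A}$, you remove a binding $x:B$ with $B \not\subord A$ from the middle of a context whose remaining suffix must stay well-formed; this is safe because every retained binding $y:C$ satisfies $C \subord A$, so $B \subord C$ would give $B \subord A$ by transitivity, a contradiction --- hence no retained type can depend on $x$. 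You gesture at this by invoking Part~(1) for the intermediate contexts, but the transitivity step is the actual reason it works and should be stated. With that addition the sketch is a faithful, self-contained version of the argument the paper delegates to the literature.
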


\section{Reasoning About LF Judgements}\label{sec:lf-logic}

We are often interested in formalizing properties of LF specifications
since these reflect properties of the object systems they describe.
The logic \logic{} provides a means for doing this.
We describe this logic below towards providing a basis
for posing and addressing the primary issue of interest in this paper.

\subsection{Formulas in the Logic and their Meaning}

The logic \logic{} is parameterized by an LF signature that we will
denote by $\Sigma$.
The atomic formulas of \logic{} are, at the outset, expressions that
encode LF typing judgements of the form
$\lfchecktype{\Gamma}{M}{A}$.\footnote{With one difference in
interpretation: as will become clear in due course, the atomic
formulas assert the well-formedness of $\Gamma$ and $A$ rather than
assuming them. We also note that the
primary interest within LF is in typing judgements of the form
$\lfchecktype{\Gamma}{M}{A}$; judgements of the form
$\lfsynthtype{\Gamma}{R}{P}$ are useful mainly in defining those of
the former kind for atomic types. In reasoning about typing
judgements, it is therefore possible to dispense with the latter form
of judgement via a derived rule for the judgement
$\lfchecktype{\Gamma}{M}{A}$ when $A$ is
atomic~\cite{nadathur21arxiv}.}
However, the syntax of these formulas differs somewhat
from that of the LF expressions.
To begin with, term variables that are bound by quantifiers
are permitted to appear in types and terms.
These variables have a different logical character from the variables
that are bound by an LF context: they may be instantiated by terms in the
domains of the quantifiers, whereas the variables bound by declarations
in an LF context represent fixed entities that are also distinct from
all other similar entities within the typing judgement.
To capture the role of the variables bound in an LF
context, they are represented by \emph{nominal
  constants}~\cite{gacek11ic,tiu06lfmtp}; these are entities that we
represent by the symbol $n$ possibly with subscripts and
that behave like constants except that they may be permuted in
atomic formulas without changing the logical content.
To support this treatment, nominal constants are also allowed to
appear in expressions in the logic that encode LF types and terms.
Finally, contexts may be represented by variables that can be
quantified over.
More specifically, the atomic formulas in \logic{} take the form
$\fatm{G}{M}{A}$, where $M$ and $A$ represent an LF term and type with
the caveats just described and $G$ constitutes a \emph{context
expression} whose syntax is given by the following rule:
\[
  G \quad ::= \quad \Gamma\ |\ \cdot\ |\ G,n:A
\]
The symbol $\Gamma$ here denotes the category of variables that range
over contexts.

There is actually another structural requirement that is imposed on
the terms and types that appear in the formulas in \logic{}.
They are expected to be in canonical form and to respect the functional
structure determined by the dependent types and kinds associated with
the constants and variables appearing in them.
This requirement is realized through an \emph{arity typing} relation
$\stlctyjudg{\STLCGamma}{M}{\alpha}$ and an \emph{arity kinding}
relation $\wftype{\STLCGamma}{A}$, in which $\STLCGamma$ is an
\emph{arity context} that assigns arity types to (term) constants, nominal
constants and variables, $\alpha$ is an arity type, and $M$ and $A$
are, respectively, an LF term and type in which nominal constants may
appear.
We elide the specific definitions of these relations---the details may
be found in~\cite{nadathur21arxiv}.
The type assignment for constants in the arity context is obtained by
taking each assignment of the form $c:A$ in $\Sigma$ and replacing it
with $c:\erase{A}$; we denote this context induced by $\Sigma$ by
$\initctx$.
For nominal constants, we assume these are drawn from the set
$\mathcal{N}$ that automatically assigns an arity type to each and, in
fact, provides a denumerably infinite number of nominal constants for
each arity type.
The arity types for variables is determined by the location of the
term or type in an expression in the logic.

\begin{figure*}[tbhp]
  \centering

  \begin{tabular}{cc}

    $ \Infer
        { \wfdecls{\STLCGamma}{\emptybb}{\STLCGamma} }
        { } $

    &

    $ \Infer
        { \wfdecls{\STLCGamma}{\Delta,\, y\!:\!A}
                 {\STLCGamma' \cup \{\,y:\erase{A}\,\}} }
        { \wfdecls{\STLCGamma}{\Delta}{\STLCGamma'} \qquad
          \text{$y$ is not assigned by }\STLCGamma' \qquad
          \wftype{\STLCGamma'}{A} } $

  \end{tabular}

  \rowspace{}\rowspace{}
  
  \begin{tabular}{c}

    $ \Infer
        { \abstyping{\{x_1:\alpha_1,\ldots, x_n:\alpha_n\}\,\Delta} }
        { \text{$x_1,\ldots,x_n$ are distinct variables} \qquad
          \wfdecls{\initctx \cup \{\,x_1:\alpha_1,\ldots,x_n:\alpha_n\,\}}
                  {\Delta}
                  {\STLCGamma'} } $

  \end{tabular}

  \rowspace{}\rowspace{}

  \begin{tabular}{cc}

    $\Infer { \acstyping{\emptycs} } { } $

    &

    $\Infer
        { \acstyping{\mathcal{C},\,\mathcal{B}} }
        { \acstyping{\mathcal{C}} \quad \abstyping{\mathcal{B}} } $

  \end{tabular}

  \caption{Well-formedness Judgements for Block and Context Schemas}
  \label{fig:schematyping}
  \Description{Described in the text.}
\end{figure*}

\begin{figure*}[tbhp]
  \centering

  \begin{tabular}{cc}

    $ \Infer
        { \declinst{}{\emptybb}{\emptyce}{\emptyset} }
        { } $

    \quad & \quad

    $ \Infer
        { \declinst{}{\Delta,\,y\!:\!A}{G,\,n\!:\!\asub{A}{\theta}}
                 {\theta \cup \{ \langle y,n,\erase{A} \rangle \}} }
        { \declinst{}{\Delta}{G}{\theta} \qquad
          n\!:\!\erase{A} \in \mathcal{N} } $

  \end{tabular}

  \rowspace{}\rowspace{}

  \begin{tabular}{c}

    $ \Infer
        { \bsinst{}
            {\{x_1\!:\!\alpha_1,\ldots,x_n\!:\!\alpha_n\}\,\Delta}
            {\asub{G'}{\{\langle x_i,t_i,\alpha_i\rangle \ \vert\ 1 \le i \le n \}} } }
        { \declinst{}{\Delta}{G'}{\theta} \qquad
          \{\, \stlctyjudg{{\mathcal{N}} \cup \initctx}{t_i}{\alpha_i}
              \ \vert\ 1 \le i \le n \,\} } $

  \end{tabular}

  \rowspace{}\rowspace{}

  \begin{tabular}{cccc}

    $ \Infer
        { \csinstone{}{\mathcal{C},\mathcal{B}}{G} }
        { \bsinst{}{\mathcal{B}}{G} } $

    \quad & \quad

    $ \Infer
        { \csinstone{}{\mathcal{C},\mathcal{B}}{G} }
        { \csinstone{}{\mathcal{C}}{G} } $

    \quad & \quad

    $ \Infer
        { \csinst{}{\mathcal{C}}{\emptyce} }
        { } $

    \quad & \quad

    $ \Infer
        { \csinst{}{\mathcal{C}}{G,\,G'} }
        { \csinst{}{\mathcal{C}}{G} \qquad \csinstone{}{\mathcal{C}}{G'} } $

  \end{tabular}

  \caption{Instantiating a Context Schema}
  \label{fig:ctx-inst}
  \Description{Described in the text.}
\end{figure*}

As we have already noted, in typical typing scenarios, instantiations
for context variables adhere to a regular structure.
To be able to reason effectively about such judgements, the logic must
provide a means for imposing such structural constraints on
instantiations.
This requirement is realized by associating a special kind of typing
with context variables that is inspired by the notion
of \emph{regular worlds} in
Twelf~\cite{Pfenning02guide,schurmann00phd}.
Formally, it is based on the use of \emph{context schemas} whose forms
are given by the following rules:
\[\begin{array}{rrcl}
    \mbox{\bf Block Declarations} & \Delta      & ::= & \emptybb\ \vert\ \Delta, y : A              \\
    \mbox{\bf Block Schema}       & \mathcal{B} & ::= & \{x_1:\alpha_1,\ldots, x_n:\alpha_n\}\Delta \\
    \mbox{\bf Context Schema}     & \mathcal{C} & ::= & \cdot\ \vert\ \mathcal{C}, \mathcal{B}
  \end{array}\]
Conceptually, a context schema comprises a collection of block
schemas, each of which is parameterized by a set of variables and
assigns types that are well-formed in the arity kinding sense to
distinct variables that are also distinct from the variables
parameterizing the block schema.
These requirements are realized by the judgement
$\acstyping{\mathcal{C}}$ that is defined by the rules in
Figure~\ref{fig:schematyping}.
As an example, consider the context schema $\mathcal{B}_1,
\mathcal{B}_2$ where $\mathcal{B}_1$ is the block schema $\{\}(x_1 :
\tm, y_1 : \size{x_1}{(\s\ \z)})$ and $\mathcal{B}_2$ is the block
schema $\{T:o\}(x_2 : \tm, y_2: \tpof{x_2}{T})$; we will refer to this
context schema as $\mathcal{C}$.
It is easy to see that this is a well-formed context schema, \ie, that
$\acstyping{\mathcal{C}}$ is derivable.
Context schemas are intended to represent context expressions
that are obtained via repeated instantiations of their block schemas.
In this sense, the block schema $\mathcal{B}_1$ provides a template for
generating a context expression that might arise when we try to
check an LF term that is intended to encode a derivation that another
LF term encoding an (object-language) lambda term has a size.
Similarly, $\mathcal{B}_2$ provides a template for generating context
expressions that might arise when we try to check an LF term that is
intended to represent a typing derivation for an (object-language)
lambda term.
The context schema $\mathcal{C}$ allows both kinds of context
expressions to be generated as well as those that mix the two kinds of
``blocks'' of type assignments.

The intended meaning of context schemas is codified in the judgement
$\csinst{}{\mathcal{C}}{G}$ defined in Figure~\ref{fig:ctx-inst},
which identifies $G$ as an instance of the context schema
$\mathcal{C}$.
This definition uses the relation
$\bsinst{}{\mathcal{B}}{G}$, also formalized in the same figure, that
identifies the context expression $G$ as
an instance of a block schema $\mathcal{B}$ if $G$ is obtained by
instantiating the variables parameterizing $\mathcal{B}$ with (closed)
well-formed terms of the right arity types and replacing the variables
it assigns types to with nominal constants.
In elaborating the latter aspect, use is made of the further relation
$\declinst{}{\Delta}{G}{\theta}$, which holds if $G$ is a context
expression obtained from the block declaration $\Delta$ by replacing
the variables that it assigns types to with nominal constants; the
replacement is recorded in the substitution $\theta$.
The operation $\asub{\cdot}{\cdot}$ that is employed in these rules
represents substitution application.
The application of a substitution to \logic\ terms and types is
identical to that for LF, with the observation that nominal constants
are treated just like other constants.
The application to context expressions leaves context variables
unaffected and simply distributes to the types in the explicit
bindings.\footnote{In the use manifest in Figure~\ref{fig:ctx-inst},
context expressions do not have context variables in them, but they
may have such variables in later uses of the substitution operation.}
As illustrations of this definition, we see that our example context
schema $\mathcal{C}$ has as instances the context expressions
$(n_1 : \tm, n_2: \size{n_1}{(\s\ \z)})$ (resulting from instantiating
the block schema $\mathcal{B}_1$), $(n_1 :
\tm, n_2: \tpof{n_1}{T})$ (resulting from instantiating the block schema
$\mathcal{B}_2$), and $(n_1 : \tm, n_2: \size{n_1}{(\s\ \z)}, n_3 :
\tm, n_4 : \tpof{n_3}{T})$ (resulting from instantiating both
$\mathcal{B}_1$ and $\mathcal{B}_2$ once), where $T$ is an LF
expression representing a type in the corresponding signature.

The formulas of \logic\ are given by the following syntax rule:
\[\begin{array}{lrl}
    \mbox{\bf Formulas} & F ::= &
    \fatm{G}{M}{A}\ |\ \ftrue\ |\ \ffalse\ |\ \fimp{F_1}{F_2}\ |\ \fconjunct{F_1}{F_2}\ |                                                 \\
                        &       & \fdisjunct{F_1}{F_2}\ |\ \fall{x:\alpha}{F}\ |\ \fexists{x:\alpha}{F}\ |\ \fctx{\Gamma}{\mathcal{C}}{F} \\
  \end{array}
\]
The symbol $\Pi$ represents universal quantification pertaining to
contexts; note that such quantification is qualified by a context
schema.
The symbol $x$ represents a term variable, i.e., the logic permits
universal and existential quantification over LF terms.
The symbol $\alpha$ that annotates such variables represents an arity
type.
A formula is said to be well-formed relative to an assignment $\Psi$
of arity types to term variables and an assignment $\Xi$ of context
schemas to context variables if the following conditions hold: all
context variables that occur in the formula do so within the scope of
a context-level quantifier or are assigned a type by $\Xi$, and the
terms and types in the formula are well-formed in an arity typing
sense with respect to an arity context given by $\initctx$,
$\mathcal{N}$, and an arity typing for variables determined first by
the (term-level) quantifiers within whose scope they appear and then
by $\Psi$.
A formal description of these requirements may be found
in~\cite{nadathur21arxiv}.
In what follows, we shall assume that any formula of interest is
well-formed with respect to some $\Psi$ and $\Xi$.

The meaning of a formula is clarified by instantiating the quantifiers
and free variables in it with closed expressions, \ie, expressions
devoid of term and context variables, and then ascertaining the
validity of the result using the LF typing rules.
The instantiation must respect typing constraints.
For a term variable quantifier, this requirement translates into the
substitution term being well-formed in an arity typing sense and the
substitution being arity type preserving.
For a context variable quantifier, the instantiation must respect the
governing context schema: if $\Gamma$ is a context variable that is to
be substituted for by $G$ and $\Gamma$ is qualified by the context
schema $\mathcal{C}$, then the judgement $\csinst{}{\mathcal{C}}{G}$
must be derivable.

Quantifier instantiation requires the notion of substitution into a
well-formed formula.
A context variable substitution $\sigma$ has the form
$\{G_1/\Gamma_1,\allowbreak\ldots,\allowbreak G_n/\Gamma_n\}$ where,
for $1 \leq i \leq n$, $\Gamma_i$ is a context variable and $G_i$ is a
context expression.
The application of $\sigma$ to a formula $F$, denoted by
$\subst{F}{\sigma}$, corresponds to the replacement of the free
occurrences of the variables $\Gamma_1,\ldots,\Gamma_n$ in $F$ by the
corresponding context expressions, renaming bound context variables
appearing in $F$ away from those appearing in $G_1,\ldots,G_n$.
For term variables, we adapt the substitution operation described for
LF expressions to formulas in \logic.
This operation distributes over quantifiers and logical symbols in
formulas, respecting the scopes of quantifiers through the necessary
renaming.
The application to an atomic formula of the form $\fatm{G}{M}{A}$
distributes to the component parts.

Definition~\ref{def:validity} formalizes validity for \logic{} formulas.
The first clause in this definition uses the judgements
$\lfwfctx{G}$, $\lfwftype{G}{A}$, and $\lfchecktype{G}{M}{A}$, where
$G$, $M$, and $A$ are closed expressions.
Thus, these expressions are identical to the ones in LF with the
exception that what were earlier understood to be variables bound in a
context are now denoted by nominal constants.
The typing rules in LF adapt naturally to this difference with perhaps
the only noteworthy observation being that the rules for the binding
operators $\lambda$ and $\Pi$ introduce new nominal constants into the
typing context and correspondingly replace the bound variable by the
chosen nominal constant in the body.
We assume such an adaptation in the interpretation of these
judgements in the definition.
We also note here that the atomic formula $\fatm{G}{M}{A}$ includes
assertions of well-formedness for $G$ and $A$ in addition to the
derivability of $\lfchecktype{G}{M}{A}$.
\begin{definition}\label{def:validity}
  Let $F$ be a closed well-formed \logic{} formula.
  \begin{itemize}
    \item If $F$ is $\fatm{G}{M}{A}$, then it is valid exactly when
          $\lfwfctx{G}$, $\lfwftype{G}{A}$, and
          $\lfchecktype{G}{M}{A}$ are derivable.

    \item If $F$ is $\ftrue$ it is valid and if it is $\ffalse$ it is not valid.

    \item If $F$ is $\fimp{F_1}{F_2}$, it is valid if $F_1$ is not
          valid or $F_2$ is valid.

    \item If $F$ is $\fconjunct{F_1}{F_2}$, it is valid if both $F_1$ and
          $F_2$ are valid.

    \item If $F$ is $\fdisjunct{F_1}{F_2}$, it is valid if either $F_1$ or $F_2$ is valid.

    \item If $F$ is $\fctx{\Gamma}{\mathcal{C}}{F}$, it is valid if
          $\subst{F}{G/\Gamma}$ is valid for every $G$ such that
          $\csinst{}{\mathcal{C}}{G}$ is derivable.

    \item If $F$ is $\fall{x:\alpha}{F}$, it is valid if
          $\asub{F}{\asubentry{x}{M}{\alpha}}$ is valid for every $M$ such that
          $\stlctyjudg{\noms \cup \initctx}{M}{\alpha}$ is derivable.

    \item If $F$ is $\fexists{\typeof{x}{\alpha}}{F}$, it is valid if
          $\asub{F}{\asubentry{x}{M}{\alpha}}$ is valid for some $M$ such that
          $\stlctyjudg{\noms \cup \initctx}{M}{\alpha}$ is derivable.
  \end{itemize}
\end{definition}

\begin{example}
  \label{ex:plus-exist}
  To illustrate the capabilities of \logic{}, let us consider
  formalizing the property that the sum of two natural numbers is
  always defined.
  Based on the encoding presented in Figure~\ref{fig:size-sig}, this
  can be done through the formula
  \begin{tabbing}
    \qquad\=\qquad\=\qquad\=\kill
    \>$\fctx{\Gamma}{\mathcal{C}}{}\fall{\typeof{N_1}{\oty}}{}\fall{\typeof{N_2}{\oty}}{}$\\
    \>\>$\fimp{\fatm{\Gamma}{N_1}{\nat}}{}\fimp{\fatm{\Gamma}{N_2}{\nat}}{}$\\
    \>\>\>$\fexists{\typeof{N_3}{\oty}}{\fexists{\typeof{D}{\oty}}
        {\fatm{\Gamma}{D}{\plus{N_1}{N_2}{N_3}}}}$
  \end{tabbing}
  where $\mathcal{C}$, the context schema qualifying the
  quantification over $\Gamma$, is a context schema with a single empty block schema
  $\{\}()$.\footnote{A more natural statement is perhaps one that does
    not involve quantification over a context, a matter we discuss in
    Section~\ref{sec:applications}.}
  We can argue for the validity of this formula by induction on the height of
  the derivation of the judgement $\fatm{\Gamma}{N_1}{\nat}$ that
  appears in it.
  In more detail, the argument proceeds by case analysis on its
  derivation.
  In the case that it is derivable because $N_1$ is $\z$, we may pick
  the instantiations $N_2$ and $\plusz$ for $N_3$ and $D$,
  respectively, to show that the conclusion follows.
  When $N_1$ is $(\s\ N_1')$, we invoke the inductive hypothesis with
  respect to the derivability of $\fatm{\Gamma}{N'_1}{\nat}$ to conclude that
  $\lfchecktype{\Gamma}{D'}{\plus{N_1'}{N_2}{N_3'}}$ holds for some $N_3'$ and
  $D'$.
  We can then instantiate $N_3$ and $D$ to $(\s\ N_3')$ and $\pluss\ N_1'\ N_2\
    N_3'\ D'$, respectively, to complete the argument.
\end{example}

The context schema used in this example is the most natural one for
the property: the only instantiation for the quantified context
variable is the empty context $\cdot$, reflecting the fact that
natural numbers do not embody binding notions.
However, we can use richer context schemas that permit for non-empty
instantiations with bindings that do not impact the typing
judgements involved while still preserving the validity of the formula.
For example, the formula would still be valid if the context
variable quantification is governed by a
context schema that comprises a single block schema of the form
$x\of \tm, y : \size{x}{(\s\app z)}$.
In the next two sections, we will develop a sound, mechanizable method
for determining that formula validity is preserved under such changes
in context schema qualification.

\subsection{Proving the Validity of Formulas}\label{sec:proof-system}

The logic \logic{} is complemented by a proof system that provides the
basis for mechanizing arguments of validity for closed formulas; using
this system, it is possible to formalize arguments such as the one
described in Example~\ref{ex:plus-exist}.
The rules in the system enable the derivation of sequents of the
form $\seq[\mathbb{N}]{\Psi}{\Xi}{\Omega}{F}$, where $\mathbb{N}$
is a finite set of nominal constants, $\Psi$ is a finite set of term
variables with associated arity types, $\Xi$ is a finite set of
context variables with associated context variable types,
$\Omega$ is a finite set of \emph{assumption formulas}, and $F$ is a
\emph{conclusion} or \emph{goal} formula.
The formulas in $\Omega \cup \{F\}$ must be well-formed relative to
$\Psi$ and $\Xi$ and must use only those nominal constants that are
contained in $\mathbb{N}$; $\mathbb{N}$, $\Psi$, and $\Xi$ are
referred to as the \emph{support set}, the \emph{eigenvariables
context}, and the \emph{context variables context} of the sequent.
The goal of showing that a closed formula $F$ whose nominal
constants are contained in the set $\mathbb{N}$ is valid translates
into constructing a derivation for the sequent
$\seq[\mathbb{N}]{\emptyset}{\emptyset}{\emptyset}{F}$.
Using the quantifier rules will produce sequents with non-empty
eigenvariables and context variables context as proof obligations.
The context variable types will initially be equivalent to context
schemas but will take on a more elaborate structure in the treatment
of the typing of abstractions and case analysis over formulas of the
form $\fatm{G}{M}{A}$; we elide details since they are not relevant to
this paper.

Showing the soundness of the rules in the system requires a definition
of validity for sequents.
This is done first for closed sequents, \ie, ones of the form
$\seq[\mathbb{N}]{\emptyset}{\emptyset}{\Omega}{F}$:
this is valid if $F$ is valid or one of the
assumption formulas in $\Omega$ is not valid.
A sequent of the general form
$\seq[\mathbb{N}]{\Psi}{\Xi}{\Omega}{F}$ is then considered valid if
all of its closed instances are valid, where such instances are
obtained by substituting closed terms not containing the nominal
constants in $\mathbb{N}$ and respecting arity typing constraints for
the variables in $\Psi$ and replacing the variables in $\Xi$ with
closed context expressions that respect their context variable types.

\section{Context Schema Subsumption}
\label{sec:subsumption}

We can now state the objective of this paper precisely: we would like
to identify conditions under which the validity of a closed formula of
the form $\fctx{\Gamma}{\mathcal{C}}{F}$ implies that
of the formula $\fctx{\Gamma}{\mathcal{C}'}{F}$; the latter
formula represents the transportation of the former to the context
schema $\mathcal{C}'$.
For this formula to be valid, it must be the case that
$\ctxsub{F}{G'/\Gamma}$ is valid for every closed context expression
$G'$ that instantiates the context schema $\mathcal{C}'$.
In describing sufficient conditions for this property, we will
distinguish between instantiations for $\Gamma$ that constitute
well-formed context expressions with respect to the ambient signature
and those that do not.
For instantiations of the latter kind, we will impose a requirement on
the structure of the formula $F$.
We discuss this requirement in detail in the next section.
We focus in this section on the case of well-formed contexts.
For instantiations of this kind, we will identify a syntactically
checkable relation between $\mathcal{C}$ and $\mathcal{C'}$ that, when
combined with the validity of $\fctx{\Gamma}{\mathcal{C}}{F}$, will
ensure the validity of $\ctxsub{F}{G'/\Gamma}$ for every well-formed
instance $G'$ of $\mathcal{C}'$.

The particular relation that we will identify between the context
schemas $\mathcal{C}$ and $\mathcal{C}'$ is intended to have the
following content: if this relation holds, then, corresponding to each
well-formed instance $G'$ of $\mathcal{C}'$ there must exist a
well-formed instance $G$ of $\mathcal{C}$ that is such that $F$ is
valid under the substitution of $G'$ for $\Gamma$ exactly when $F$ is
valid under the substitution of $G$ for $\Gamma$.
At an intuitive level, this is a subsumption relation between the
context schemas $\mathcal{C}$ and $\mathcal{C}'$ that is parameterized
by the context variable $\Gamma$ and the formula $F$.
We denote the relation by
$\schemarelformula{\mathcal{C}}{F}{\mathcal{C}'}$.

We develop this relation and establish the essential property about it
in the rest of this section.
The first step in this direction is to identify a subsumption relation
between context expressions.
We do that in the first subsection.
In relating context schemas, we will need to match up block schemas
that comprise them.
This process is most easily expressed when the names of the variables
in block schemas are aligned.
To facilitate this, we digress briefly to describe the idea of a
variant of a block schema that enables the renaming of variables.
The last subsection utilizes the preceding machinery to define the
subsumption relation between context schemas and to show that it
embodies the desired property.

\subsection{Context Expression Subsumption}\label{ssec:cesubsumption}

Given two closed context expressions $G$ and $G'$, we desire to
characterize the situations in which $G$ contains all the information
that is needed from $G'$ to determine the validity of a formula $F$
when these context expressions are viewed as substitutions for a
context variable  $\Gamma$ that possibly appears free in $F$.
We denote this relation by $\cesub{\Gamma}{G}{G'}{F}$.
The first step in defining the relation is to identify those
situations in which an object of a type $A$ that appears in the
instantiation of $\Gamma$ can impact a typing judgement that appears
within $F$.
In this and other ensuing discussions, we assume the obvious extension
of the subordination relation to types in which nominal constants
appear.

\begin{figure}[tbhp]
  \centering

  \begin{tabular}{cc}

    $ \Infer
        { \tfsubord{A}{\fctx{\Gamma'}{\mathcal{C}}{F}}{\Gamma} }
        { \tfsubord{A}{F}{\Gamma} } \quad \Gamma \neq \Gamma' $

    &
    \quad

    $ \Infer
        { \tfsubord{A}{\fgeneric{x \!:\! \alpha}{F}}{\Gamma} }
        { \tfsubord{A}{F}{\Gamma} } \quad \genericq \in \{\forall, \exists\} $

  \end{tabular}

  \rowspace{}\rowspace{}

  \begin{tabular}{cc}

    $ \Infer
        { \tfsubord{A}{F_1 \bullet F_2}{\Gamma} }
        { \tfsubord{A}{F_1}{\Gamma} } \; \bullet \in \{\supset,\land,\lor\} $

    &

    $ \Infer
        { \tfsubord{A}{F_1 \bullet F_2}{\Gamma} }
        { \tfsubord{A}{F_2}{\Gamma} } \; \bullet \in \{\supset,\land,\lor\} $

  \end{tabular}

  \rowspace{}\rowspace{}

  \begin{tabular}{cc}

    $ \Infer
        { \tfsubord{A}{\fatm{\Gamma}{M}{A'}}{\Gamma} }
        { \subordinates{A}{A'} } $

    &

    $ \Infer
        { \tfsubord{A}
            {\fatm{\Gamma, n_1 \!:\! A_1, \ldots, n_k \!:\! A_k}{M}{A'}}
            {\Gamma} }
        { } $

  \end{tabular}

  \caption{Subordination of a Type by a Formula}\label{fig:tfsubord}
  \Description{Inferences rules determining when a type $A$ is subordinate to a
  formula relative to a context variable $\Gamma$. The rules specify that a type
  subordinates a formula with a quantifier ($\Pi, \forall, \exists$) precisely
  when the type subordinates the quantifier's body. A type subordinates a
  formula with a binary logical connective ($\land, \lor, \supset$) when the
  type subordinates at least one subformula. Finally, a type subordinates an
  atomic formula if context contains the context variable $\Gamma$ and it
  either: has a non-empty explicit context portion or the type $A$ is
  subordinate to the atomic formula's type $A'$.}
\end{figure}

\begin{definition}
The relation $\tfsubord{A}{F}{\Gamma}$, to be read as ``the type $A$ is
subordinate to the formula $F$ relative to the context variable
$\Gamma$,'' holds exactly when it is derivable by virtue of the rules in
Figure~\ref{fig:tfsubord}.
Note that the last rule applies only if there is at least one explicit
binding in the context expression that begins with $\Gamma$.
It is easy to see that $\tfsubord{A}{F}{\Gamma}$ is a decidable relation.
We write $\nottfsubord{A}{F}{\Gamma}$ to mean that
$\tfsubord{A}{F}{\Gamma}$ is not derivable.
\end{definition}

\begin{figure}[tbhp]
  \centering
  \begin{tabular}{ccc}

    $ \Infer
        { \cesub{\Gamma}{\emptyce}{\emptyce}{F} }
        { } $

    \quad
    &

    $ \Infer
        { \cesub{\Gamma}{G,\,n\!:\!A}{G',\,n\!:\!A}{F} }
        { \cesub{\Gamma}{G}{G'}{F} } $

    \quad
    &

    $ \Infer
        { \cesub{\Gamma}{G}{G',\,n\!:\!A}{F} }
        { \cesub{\Gamma}{G}{G'}{F} \qquad \nottfsubord{A}{F}{\Gamma} } $

  \end{tabular}

  \caption{Context Expression Subsumption Relative to a Formula and
    Context Variable}
  \label{fig:cesub}
  \Description{Described in text.}
\end{figure}

The desired relation between context expressions then amounts
to noting that it is possible to leave out bindings where the type is
not subordinate to the formula relative to the context variable for
which they are being contemplated as substitutions.

\begin{definition}
The relation $\cesub{\Gamma}{G}{G'}{F}$ holds between two context
expressions $G$ and $G'$, a context variable $\Gamma$, and a formula
$F$ exactly when this relation is derivable by the rules shown in
Figure~\ref{fig:cesub}.
Anticipating a necessity in Section~\ref{ssec:cssubsumption}, we
extend this relation to the situation where $G$ and $G'$ are block
declarations by letting bindings be of the form $x: A$ instead of
$n:A$.
\end{definition}

The following lemma captures the essence of the subsumption relation
for context expressions.
\begin{lemma}\label{lem:tfsubmin}
  If a formula of the form $\fatm{\Gamma}{M}{A'}$ occurs in the
  formula $F$ and $\cesub{\Gamma}{G}{G'}{F}$
  is derivable, then $G\subordmin{A'}$ and $G'\subordmin{A'}$ must be
  identical.
  If a formula of the form $\fatm{\Gamma,n_1 : A_1, \ldots, n_k :
    A_k}{M}{A'}$ occurs in $F$ and $\cesub{\Gamma}{G}{G'}{F}$
  is derivable, then $G$ must be identical to $G'$.
\end{lemma}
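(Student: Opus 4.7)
The plan is to prove both parts by simultaneous induction on the derivation of $\cesub{\Gamma}{G}{G'}{F}$, with case analysis on the last rule from Figure~\ref{fig:cesub}. The base case, where $G = G' = \emptyce$, is immediate for both parts. The second rule extends both $G$ and $G'$ by the same binding $n : A$; here the inductive hypothesis gives the desired equality on the smaller contexts, and since the minimization test on $n : A$ depends only on whether $\subordinates{A}{A'}$ holds, that test applies identically on both sides, so $(G, n : A)\subordmin{A'} = (G', n : A)\subordmin{A'}$ follows for part~1, and the extension of the identity $G = G'$ gives part~2.

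The third rule, which derives $\cesub{\Gamma}{G}{G', n : A}{F}$ from $\cesub{\Gamma}{G}{G'}{F}$ together with $\nottfsubord{A}{F}{\Gamma}$, carries the main content. For part~1, I must show that $n : A$ is also pruned from $G'$ when computing $(G', n : A)\subordmin{A'}$, which reduces to establishing $A \not\subord A'$. I argue by contrapositive: if $\subordinates{A}{A'}$ held, then from the given occurrence of $\fatm{\Gamma}{M}{A'}$ in $F$ together with the left atomic rule of Figure~\ref{fig:tfsubord} I could derive $\tfsubord{A}{\fatm{\Gamma}{M}{A'}}{\Gamma}$, and then propagate upward through the enclosing quantifiers and connectives via the other rules of that figure to obtain $\tfsubord{A}{F}{\Gamma}$, contradicting the rule's premise. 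Combined with the inductive hypothesis, this yields the desired equation.

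For part~2 under the same third rule, I invoke the right atomic rule of Figure~\ref{fig:tfsubord} instead. Since $F$ by hypothesis contains a subformula $\fatm{\Gamma, n_1 : A_1, \ldots, n_k : A_k}{M}{A'}$ with at least one explicit binding beyond $\Gamma$, that rule derives $\tfsubord{A}{\fatm{\Gamma, n_1 : A_1, \ldots, n_k : A_k}{M}{A'}}{\Gamma}$ for every type $A$ with no further premise; upward propagation then gives $\tfsubord{A}{F}{\Gamma}$ for every $A$. Hence the premise $\nottfsubord{A}{F}{\Gamma}$ of the third rule is unsatisfiable under this hypothesis, making this case vacuous. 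The only nontrivial auxiliary fact needed is this upward propagation step, which is a straightforward structural induction on $F$ using the first four rules of Figure~\ref{fig:tfsubord}, and it is the main obstacle only in that it must be invoked in both subcases of the third rule.
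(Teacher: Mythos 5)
Your proposal is correct and follows essentially the same route as the paper: the paper's proof consists of exactly your two key observations (that an occurrence of $\fatm{\Gamma}{M}{A'}$ forces $\tfsubord{A}{F}{\Gamma}$ for every $A \subord A'$, and that an occurrence with explicit bindings forces it for every $A$), followed by the same induction over the subsumption derivation, in which the third rule of Figure~\ref{fig:cesub} is blocked or restricted to non-subordinate types. Your write-up merely makes explicit the upward-propagation step and the case analysis that the paper leaves implicit.
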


\begin{proof}
  In the first case, $\tfsubord{A}{F}{\Gamma}$ is
  derivable for any type $A$ such that $\subordinates{A}{A'}$.
  In the second case, $\tfsubord{A}{F}{\Gamma}$ is derivable for any
  type $A$.
  Using these observations, we can show both properties by an
  induction on the   definition of $G\subordmin{A'} =
  G'\subordmin{A'}$.
\end{proof}

The key property pertaining to the subsumption relation is expressed
in the following theorem.

\begin{theorem}\label{thm:wf-ctx-atm-transport}
  Let $F$ be a formula that is well-formed with respect to the type
  assignments $\Psi$ and $\Xi$ to term and context variables
  respectively.
  Further, let $\theta$ be a closed substitution for all the
  variables assigned types by $\Psi$ that is such that
  $\asub{F}{\theta}$ is defined, let $\Gamma$ be a context variable
  that possibly appears free in $F$, and let $\sigma$ be a closed
  substitution for all context variables other than $\Gamma$ that are
  assigned types by $\Xi$.
  Then for any closed context expressions $G$ and $G'$ such that
  $\lfwfctx{G}$, $\lfwfctx{G'}$ and $\cesub{\Gamma}{G}{G'}{F}$ are
  derivable, it is the case that
  $\ctxsub{\actxsub{F}{\theta}{\sigma}}{G/\Gamma}$ is valid if and
  only if
  $\ctxsub{\actxsub{F}{\theta}{\sigma}}{G'/\Gamma}$ is valid.
\end{theorem}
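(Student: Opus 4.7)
The plan is to proceed by structural induction on $F$, with the induction hypothesis quantifying over all suitable $\theta$ and $\sigma$. Before beginning the induction, I would record a preliminary observation: the subsumption judgement $\cesub{\Gamma}{G}{G'}{F}$ is preserved under substitution, i.e., if it is derivable for $F$ then it is also derivable for $\actxsub{F}{\theta}{\sigma}$. This holds because term substitution cannot alter the head constant of any type occurring in an atomic subformula of $F$ (and hence cannot disturb the non-subordination side conditions on dropped bindings), and because $\sigma$ by hypothesis omits $\Gamma$, leaving the leading $\Gamma$ of any relevant atomic subformula context structurally intact.

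The heart of the argument is the atomic case $F = \fatm{G_0}{M}{A'}$. When $G_0$ does not begin with $\Gamma$, the substitutions of $G$ and $G'$ for $\Gamma$ are both vacuous and the equivalence is immediate. When $G_0 = \Gamma$, Lemma~\ref{lem:tfsubmin} supplies $G\subordmin{A'} = G'\subordmin{A'}$; since head constants are invariant under term substitution, this equality transfers to $\asub{A'}{\theta}$ in place of $A'$. Unfolding the definition of validity, the equivalence reduces to showing that $\lfwftype{G}{\asub{A'}{\theta}}$ and $\lfchecktype{G}{\asub{M}{\theta}}{\asub{A'}{\theta}}$ hold exactly when the corresponding judgements with $G'$ hold. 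Applying clauses (2) and (3) of Proposition~\ref{thm:trans-canon-form} through the shared minimized context yields precisely this equivalence; the well-formedness judgement $\lfwfctx{\cdot}$ is supplied on both sides by the theorem's hypotheses. When $G_0 = \Gamma, n_1\!:\!A_1, \ldots, n_k\!:\!A_k$ with $k \ge 1$, the second clause of Lemma~\ref{lem:tfsubmin} gives $G = G'$ outright and the equivalence is trivial.

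For the remaining cases, the rules in Figure~\ref{fig:tfsubord} guarantee that any derivation of $\tfsubord{A}{F}{\Gamma}$ for a compound $F$ must be witnessed by a derivation for at least one immediate subformula; by contraposition, $\nottfsubord{A}{F}{\Gamma}$ forces non-subordination to hold for every subformula. This ensures that $\cesub{\Gamma}{G}{G'}{F_i}$ can be derived from $\cesub{\Gamma}{G}{G'}{F}$ for each subformula $F_i$, making the induction hypothesis applicable. The propositional cases and the $\ftrue$, $\ffalse$ base cases are then immediate. A term quantifier $\fgeneric{x:\alpha}{F'}$ is handled by applying the IH to $F'$ under $\theta$ extended by the chosen instantiation; a context quantifier $\fctx{\Gamma'}{\mathcal{C}}{F'}$ with $\Gamma' \neq \Gamma$ is handled by absorbing an arbitrary instance $G''$ of $\mathcal{C}$ into $\sigma$ (extending it to $\sigma \cup \{G''/\Gamma'\}$) and applying the IH to $F'$.

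The principal obstacle lies in the atomic case: the syntactic guarantee of Lemma~\ref{lem:tfsubmin} must be matched to the semantic equivalence encoded by the three typing judgements hidden in the validity of $\fatm{G_0}{M}{A'}$. Chaining clauses (2) and (3) of Proposition~\ref{thm:trans-canon-form} through the common minimized context is the key manoeuvre, and ensuring that the well-formedness premise required by clause (3) is established on both sides before invoking it is the main bookkeeping burden. Once this is in place, the outer structural induction proceeds routinely.
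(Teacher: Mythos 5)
Your proposal is correct and follows essentially the same route as the paper: structural induction on $F$, with the atomic cases resolved via Lemma~\ref{lem:tfsubmin}, the invariance of type heads under substitution, and Proposition~\ref{thm:trans-canon-form}, and the compound cases by pushing the substitutions through and applying the induction hypothesis (your explicit remark that non-subordination, and hence the subsumption judgement, propagates to immediate subformulas is a detail the paper leaves implicit but uses in the same way). The only omission is the trivial shadowing case $\fctx{\Gamma}{\mathcal{C}}{F'}$ where the bound variable is $\Gamma$ itself, which is immediate since the substitution for $\Gamma$ then has no effect.
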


\begin{proof}
  The proof proceeds by induction on the structure of $F$.
  If $F$ is $\fdisjunct{F_1}{F_2}$, $\fconjunct{F_1}{F_2}$, $\fimp{F_1}{F_2}$,
  the substitutions distribute over the connective and the induction
  hypothesis easily yields the desired result.
  If $F$ is $\fall{x\colon\alpha}{F'}$ or
  $\fexists{x\colon\alpha}{F'}$, we may assume without loss of
  generality that $x$ is distinct from the free variables of $F$.
  But now we observe that the desired result would hold if
  $\ctxsub{\actxsub{F}{\theta\cup \{\langle x, M,
  \alpha\rangle\}}{\sigma}}{G/\Gamma}$ is valid if and only if
  $\ctxsub{\actxsub{F}{\theta\cup \{\langle x, M,
  \alpha\rangle\}}{\sigma}}{G'/\Gamma}$ is valid for any term $M$ of
  the appropriate arity type.
  However, this must be true by virtue of the induction hypothesis.
  An argument that is similar in structure applies when $F$ is of the
  form $\fctx{\Gamma'}{\mathcal{C}}{F'}$ and $\Gamma \neq \Gamma'$.
  The desired result follows trivially when $F$ is $\top$, $\bot$, of
  the form $\fctx{\Gamma}{\mathcal{C}}{F'}$ or
  an atomic formula of the form $\fatm{\Gamma', n_1:A_1, \ldots,
    n_k:A_k}{M}{A}$ or
  $\fatm{\Gamma'}{M}{A}$ where $\Gamma \ne \Gamma'$ because the
  formula is not affected by the substitution for $\Gamma$.

  It is only left to consider the cases where $F$ is an atomic
  formula of the form $\fatm{\Gamma, n_1 : A_1,\ldots, n_k :
    A_k}{M}{A}$ or $\fatm{\Gamma}{M}{A}$.
  In the former case, $G$ must be identical to $G'$ by
  Lemma~\ref{lem:tfsubmin} and the desired conclusion follows
  immediately.
  In the latter case, $\actxsub{F}{\theta}{\sigma}$ is of the form
  $\fatm{\Gamma}{\asub{M}{\theta}}{\asub{A}{\theta}}$.
  Since $\cesub{\Gamma}{G}{G'}{F}$ is derivable, it follows using
  Lemma~\ref{lem:tfsubmin} that $G\subordmin{A} = G'\subordmin{A}$.
  The desired result now follows from
  Proposition~\ref{thm:trans-canon-form}, the fact that the head of a type
  remains unchanged under substitution, and the assumptions that $G$
  and $G'$ are well-formed.
\end{proof}

\subsection{Variants of a Block Schema}

We now describe a process for renaming the variables that parameterize
a block schema as well as the ones to which it assigns types.
The key requirements of such a renaming is that it must not affect
well-formedness and it must leave the instances of the block schema
unchanged.
At a conceptual level, the renaming process is straightforward: we
simply apply a permutation of variable names to the block
schema.
However, the technical details are complicated by the
fact that the application of the permutation must be capture avoiding
and subsequent substitutions must be factored through the
permutation.
We handle these issues below by formulating the permutation as a
substitution.
The trusting reader may skip these details after noting the important
properties for block schema variants embodied in the statements of
Theorems~\ref{thm:variant-well-formed} and \ref{thm:variant-instance}.

In the discussions that follow, we say that a block declaration
$y_1:A_1,\ldots,y_m:A_m$ or a context expression
$n_1:A_1,\ldots,n_m:A_m$ is well-formed with respect to an arity
context $\STLCGamma$ if it is the case that $\wftype{\STLCGamma}{A_i}$
is derivable for $1 \leq i\leq m$.
\begin{definition}
  A variable permutation is a bijection on variables that differs from
  the identity map at only a finitely many points.
  We write $\{z_1/x_1,\ldots,z_n/x_n\}$ to denote the permutation that
  maps $x_i$ to $z_i$ for $1\leq i \leq n$ and that is the identity
  for all other variables.
  Let $\pi$ be a variable permutation.
  Then
  \begin{enumerate}
    \item $\pi.x$ denotes the result of applying $\pi$ to the variable
          $x$;

    \item the permutation substitution induced by an arity context
          $\STLCGamma$ from $\pi$, denoted by
          $\permsubs{\pi}{\STLCGamma}$, is
          \begin{tabbing}
            \quad\=$\{\langle x,y,\alpha'\rangle\ |\ $\=\kill
            \> $\{\langle x,z,\alpha'\rangle\ |\ z/x \in
              \pi\ \mbox{and}\ \alpha'\ \mbox{is}\ \alpha$\\
            \>\>$\mbox{if}\ x:\alpha
              \in \STLCGamma\ \mbox{and}\ \oty\ \mbox{otherwise}\}$;
          \end{tabbing}
    \item if $\Delta = y_1:A_1,\ldots,y_m:A_m$ is a block declaration
          that is well-formed with respect to the arity context
          $\STLCGamma$, then
          $\apppsubsbd{\pi}{\STLCGamma}{\Delta}$ denotes the block
          declaration
          \begin{tabbing}
            \quad\=\kill
            \>$\pi.y_1:\asub{A_1}{\permsubs{\pi}{\STLCGamma}},\ldots,
              \pi.y_m:\asub{A_m}{\permsubs{\pi}{\STLCGamma}}$;
          \end{tabbing}

        \item if $\mathcal{B}$ is the block schema
          \begin{tabbing}
            \quad=\kill
          $\{x_1:\alpha_1,\ldots,
            x_n:\alpha_n\}y_1:A_1,\ldots,y_m:A_m$
          \end{tabbing}
          then
          $\blkctx{\mathcal{B}}$ denotes the arity context
          \begin{tabbing}
            \quad\=\kill
            \>$\initctx \cup \{x_1:\alpha_1,\ldots,x_n:\alpha_n, y_1:\erase{A_1},\ldots,y_m:\erase{A_m}\};$
          \end{tabbing}

    \item if $\mathcal{B}$ is the block schema
          $\{x_1:\alpha_1,\ldots, x_n:\alpha_n\}\Delta$, then
          $\pi.\mathcal{B}$
      denotes the block schema
      \begin{tabbing}
        \quad\=\kill
        $\{\pi.x_1:\alpha_1,\ldots, \pi.x_n:\alpha_n\}
        \apppsubsbd{\pi}{\blkctx{\mathcal{B}}}{\Delta}.$
      \end{tabbing}
  \end{enumerate}
  The definition of $\apppsubsbd{\pi}{\STLCGamma}{\Delta}$ and
  $\pi.\mathcal{B}$ assumes that relevant hereditary
  substitutions are well-defined, a requirement that can be seen to hold
  under the respective well-formedness assumptions.
  For any permutation $\pi$, we say that $\pi.\mathcal{B}$
  is a variant of $\mathcal{B}$.
\end{definition}

The important properties of block schema variants are contained in the
following two theorems.

\begin{theorem}\label{thm:variant-well-formed}
  A variant of a well-formed block schema $\mathcal{B}$ is
  well-formed: if $\abstyping{\mathcal{B}}$ has a derivation then
  $\abstyping{\pi.\mathcal{B}}$ must also have one for any variable
  permutation $\pi$.
\end{theorem}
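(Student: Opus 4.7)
The plan is to reduce the theorem to a \emph{renaming lemma} for arity kinding: if $\wftype{\STLCGamma}{A}$ is derivable and $\pi$ is a variable permutation, then $\wftype{\STLCGamma'}{\asub{A}{\permsubs{\pi}{\STLCGamma}}}$ is also derivable, where $\STLCGamma'$ is the arity context obtained from $\STLCGamma$ by renaming its declared variables according to $\pi$ while preserving their arity types. This lemma follows by a routine induction on the structure of the arity kinding derivation, the key observation being that the permutation substitution only relabels free occurrences of variables and preserves the structural shape of types; in particular, no new $\beta$-redexes are introduced, so the hereditary substitutions implicit in $\asub{\cdot}{\permsubs{\pi}{\STLCGamma}}$ are guaranteed to be well-defined.

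Given this renaming lemma, I would unpack the definition of $\abstyping{\mathcal{B}}$ for $\mathcal{B} = \{x_1:\alpha_1,\ldots,x_n:\alpha_n\}\Delta$, which supplies distinct $x_1,\ldots,x_n$ together with a derivation of $\wfdecls{\initctx \cup \{x_1:\alpha_1,\ldots,x_n:\alpha_n\}}{\Delta}{\STLCGamma_0}$ for some $\STLCGamma_0$. Since $\pi$ is a bijection, the images $\pi.x_1,\ldots,\pi.x_n$ are also distinct, so it suffices to build a derivation of $\wfdecls{\initctx \cup \{\pi.x_1:\alpha_1,\ldots,\pi.x_n:\alpha_n\}}{\apppsubsbd{\pi}{\blkctx{\mathcal{B}}}{\Delta}}{\STLCGamma_1}$ for some $\STLCGamma_1$. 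I would do this by an inner induction on the original declaration judgment: the empty case $\Delta = \emptybb$ is immediate, and for $\Delta = \Delta', y:A$ the induction hypothesis handles the prefix while the renaming lemma applied to the premise for $A$ yields arity well-formedness of $\asub{A}{\permsubs{\pi}{\blkctx{\mathcal{B}}}}$ in the renamed intermediate arity context.

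The last ingredient is that arity erasure commutes with the permutation substitution, i.e., $\erase{\asub{A}{\permsubs{\pi}{\blkctx{\mathcal{B}}}}} = \erase{A}$, which is immediate because the substitution only renames variables and leaves the type skeleton intact. This ensures that the new declaration added to the synthesized arity context is $\pi.y:\erase{A}$, and its distinctness from the earlier declarations again follows from $\pi$ being a bijection. The main obstacle is establishing the renaming lemma cleanly, together with verifying the well-definedness of the hereditary substitutions bundled into $\apppsubsbd{\pi}{\blkctx{\mathcal{B}}}{\Delta}$; once this technical scaffolding is in place, the remainder is just bookkeeping about how permutations interact with erasure and the successive extensions of an arity context.
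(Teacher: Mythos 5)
Your proposal is correct and follows essentially the same route as the paper: an induction on the derivation of the block-declaration judgement $\wfdecls{\STLCGamma}{\Delta}{\STLCGamma'}$, with the key auxiliary fact being that arity kinding is preserved under variable permutations (your ``renaming lemma''). The additional observations you make about erasure commuting with the permutation substitution and distinctness being preserved by bijectivity are exactly the bookkeeping the paper elides with ``the theorem follows easily.''
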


\begin{proof}
  If $\STLCGamma$ is an arity context, let
  $\pi.\STLCGamma$ represent the arity context obtained
  by replacing each assignment of the form $x:\alpha$ in $\STLCGamma$
  where $x$ is a variable by $\pi.x:\alpha$.
  We first show that if $\Delta$ is a block declaration that is
  well-formed with respect to the arity context $\STLCGamma''$ and it
  is the case that
  $\wfdecls{\STLCGamma}{\Delta}{\STLCGamma'}$ has a derivation then so
  must $\wfdecls{\pi.\STLCGamma}
    {\apppsubsbd{\pi}{\STLCGamma''}{\Delta}}
    {\pi.\STLCGamma'}$.
  We prove this by an induction on the derivation of
  $\wfdecls{\STLCGamma}{\Delta}{\STLCGamma'}$, making use of the
  fact that the arity kinding relation is preserved under
  permutations of variables.
  The theorem follows easily from this observation.
\end{proof}

\begin{theorem}\label{thm:variant-instance}
  Instances of block schemas are preserved across variants: if
  $\bsinst{}{\mathcal{B}}{G}$ has a derivation then so must
  $\bsinst{}{\pi.\mathcal{B}}{G}$ for any variable permutation $\pi$.
\end{theorem}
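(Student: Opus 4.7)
The plan is to construct, directly from the given derivation of $\bsinst{}{\mathcal{B}}{G}$, a matching derivation of $\bsinst{}{\pi.\mathcal{B}}{G}$ that re-uses the same parameter instantiation terms $t_i$ and the same nominal constants $n_j$. Writing $\mathcal{B} = \{x_1:\alpha_1,\ldots,x_n:\alpha_n\}\Delta$ with $\Delta = y_1:A_1,\ldots,y_m:A_m$, inversion of the $\bsinst$ rule produces an intermediate $G'$ with $\declinst{}{\Delta}{G'}{\theta}$ derivable and typed terms $t_i$ such that $G = \asub{G'}{\{\langle x_i,t_i,\alpha_i\rangle\ |\ 1 \le i \le n\}}$. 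Repeatedly inverting $\declinst$ forces $G'$ to take the form $n_1:B_1,\ldots,n_m:B_m$, where each $B_j$ is $A_j$ with $y_l$ replaced by $n_l$ for $l < j$. Symmetrically, $\pi.\mathcal{B}$ has parameters $\pi.x_i : \alpha_i$ and block declaration $\apppsubsbd{\pi}{\blkctx{\mathcal{B}}}{\Delta}$; the goal is to show that instantiating the latter with the same sequence $n_1,\ldots,n_m$ yields a $G''$ whose parameter instantiation by $\{\langle \pi.x_i,t_i,\alpha_i\rangle\}$ coincides with $G$.

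The central technical ingredient is a substitution lemma relating permutation substitutions and hereditary substitution: for every expression $E$ well-formed with respect to $\blkctx{\mathcal{B}}$,
\[
\asub{\asub{E}{\permsubs{\pi}{\blkctx{\mathcal{B}}}}}{\sigma_\pi} = \asub{E}{\sigma},
\]
where $\sigma = \{\langle x_i,t_i,\alpha_i\rangle\} \cup \{\langle y_l,n_l,\erase{A_l}\rangle\ |\ l < j\}$ and $\sigma_\pi$ is the analogous substitution whose domain variables have been renamed through $\pi$. This holds because $\permsubs{\pi}{\blkctx{\mathcal{B}}}$ is a bijective renaming that sends each variable to a variable of the same arity type; composing it with $\sigma_\pi$ therefore agrees pointwise with $\sigma$ on the free variables of $E$, and since the permutation only substitutes atomic terms (indeed variables) of matching arity, it cannot introduce hereditary redexes, so the two-stage hereditary reduction mirrors the single-stage one. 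A structural induction on $E$, together with the standard composition properties of hereditary substitution from~\cite{nadathur21arxiv}, suffices.

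With the lemma in hand, I would prove by induction on the length of $\Delta$ that $\declinst{}{\apppsubsbd{\pi}{\blkctx{\mathcal{B}}}{\Delta}}{G''}{\theta'}$ is derivable using the very nominal constants $n_1,\ldots,n_m$, with $G'' = n_1:B'_1,\ldots,n_m:B'_m$ where each $B'_j$ is obtained by applying $\permsubs{\pi}{\blkctx{\mathcal{B}}}$ to $A_j$ and then substituting $n_l$ for $\pi.y_l$ for $l < j$. The side conditions of $\declinst$ remain intact because $\erase{\asub{A_j}{\permsubs{\pi}{\blkctx{\mathcal{B}}}}} = \erase{A_j}$, so $n_j:\erase{A_j} \in \noms$ carries over. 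Invoking the substitution lemma on each $A_j$ shows that $\asub{B'_j}{\{\langle \pi.x_i,t_i,\alpha_i\rangle\}}$ equals $\asub{B_j}{\{\langle x_i,t_i,\alpha_i\rangle\}}$, so the parameter instantiation of $G''$ recovers $G$ entry-by-entry. Since the arity typings $\stlctyjudg{\noms \cup \initctx}{t_i}{\alpha_i}$ are unchanged (neither the $\alpha_i$ nor the ambient arity context is affected by $\pi$), the $\bsinst$ rule assembles the pieces into the desired derivation.

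The main obstacle is the substitution lemma: hereditary substitution normalizes as it traverses, and arguing that a pre-composed variable-for-variable permutation commutes cleanly with the subsequent substitution requires care about the shape of the intermediate expressions and about capture under the binding forms $\lambda$ and $\Pi$. The saving grace is that $\permsubs{\pi}{\blkctx{\mathcal{B}}}$ replaces variables only by variables of matching arity, so no redexes can be created or destroyed in the permutation step, and the remainder of the argument reduces to the composition properties already established for hereditary substitution in the underlying framework.
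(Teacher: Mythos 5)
Your proposal is correct and follows essentially the same route as the paper: an induction over the $\declinst$ derivation showing that the permuted block declaration instantiates to the same nominal constants with permutation-substituted types, followed by the key commutation identity that applying the permutation substitution and then the $\pi$-renamed parameter substitution coincides with applying the original substitution directly. The paper states this last identity without elaboration where you justify it via the observation that a variable-for-variable, arity-preserving renaming creates no hereditary redexes; that is exactly the right supporting argument.
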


\begin{proof}
  Given a variable permutation $\pi$ and a substitution $\theta$,
  let $\pi.\theta$ denote the substitution
  $\{\langle \pi.x, t,\alpha\rangle\ |\ \langle x,t,\alpha\rangle \in
    \theta\}.$
  Let $\Delta$ be a block declaration that is well-formed with
  respect to $\STLCGamma$, let $G$ be a context expression, and let
  $\theta$ be a substitution such that $\declinst{}{\Delta}{G}{\theta}$ has a
  derivation.
  We then show by induction on this derivation that
  $\declinst{}{\apppsubsbd{\pi}{\STLCGamma}{\Delta}}{\apppsubsbd{\pi}{\STLCGamma}{G}}{\pi.\theta}$
  must have a derivation; the notation
  $\apppsubsbd{\pi}{\STLCGamma}{G}$ represents an operation that
  replaces each assignment $n:A$ in $G$ by $n: \asub{A}{\permsubs{\pi}{\STLCGamma}}$.
  Next, for a substitution $\rho$, let $\ctx{\rho}$ be the arity
  typing context $\{ x : \alpha \mid \asubentry{x}{t}{\alpha} \in \rho \}$.
  We then observe that if $G$ is well-formed with respect to
  $\initctx \cup \ctx{\rho}$, then
  $\apppsubsbd{\pi}{\STLCGamma}{G}$ is well-formed with respect to
  $\initctx \cup \ctx{\pi.\rho}$.
  We can now ascertain that
  $\asub{\apppsubsbd{\pi}{\STLCGamma}{G}}{\pi.\rho}$ is defined and
  must be identical to $\asub{G}{\rho}$.
  The desired result follows easily from these observations.
\end{proof}

\subsection{Subsumption for Context Schemas}\label{ssec:cssubsumption}

We now turn to defining the relation
$\schemarelformula{\mathcal{C}}{F}{\mathcal{C}'}$.
The definition we provide for it will have a constructive nature:
given a well-formed instance $G'$ of $\mathcal{C}'$, it will guide us
in constructing a well-formed instance $G$ of $\mathcal{C}$ that is such
that $F$ is valid under the substitution of $G'$ for $\Gamma$ exactly
when $F$ is valid under the substitution of $G$ for $\Gamma$.
More specifically, our definition of this relation will provide us a
means for ``pruning'' the context expression $G'$ to get $G$.
Given the way context schema instances are generated, this ability
will ultimately be based on a correspondence at the level
of the block schemas that comprise context schemas.
One part of the correspondence must pay attention to how context
expressions impact the derivability of typing judgements within $F$.
For this, the notion of context expression subsumption is useful.
However, we also need to pay attention to the impact of the pruning on
the well-formedness of the resulting context expression.
In particular, the pruning in an earlier part of a context expression
should not remove a binding that is necessary for the well-formedness
of a later part of the same expression.
The relation between block
declarations relative to a context schema that is defined below
provides the basis for ensuring this property.

\begin{definition}\label{def:ctxwftransport}
 Let $\mathcal{C}$ be a context schema and let $\Delta$ and $\Delta'$
 be block declarations.
 Then the relation $\ctxwftransport{\Delta}{\Delta'}$ holds exactly
 when it is derivable using the rules in
 Figure~\ref{fig:ctxwftransport}.
 We extend this relation to the situation where $\Delta$ and $\Delta'$
 are context expressions by letting bindings be of the form $n : A$
 rather than $x: A$ and allowing nominal constants to appear in
 types.
\end{definition}

\begin{figure}[tbhp]
  \centering
  \begin{tabular}{cc}
    $ \Infer
        { \ctxwftransport{\emptyce}{\emptyce} }
        { } $

    &
    \quad

    $ \Infer
        { \ctxwftransport{\Delta,\,x\!:\!A}{\Delta',\,x\!:\!A} }
        { \ctxwftransport{\Delta}{\Delta'} } $

  \end{tabular}

  \rowspace{}\rowspace{}

  \begin{tabular}{c}

    $ \Infer
        { \ctxwftransport{\Delta}{\Delta',\,x\!:\!A} }
        { \ctxwftransport{\Delta}{\Delta'} \qquad
          A \not\subord A' \ \text{for every } A' \
          \text{such that } {y\!:\!A'} \ \text{appears in } \mathcal{C} } $

  \end{tabular}

  \caption{A pruning relation for block declarations relative to a
    context schema}
  \label{fig:ctxwftransport}
  \Description{Inference rules which specify the pruning of a block
  declaration relative to a context schema.}
\end{figure}

The following lemma shows that the pruning operation
on context expressions just described will not impact the
well-formedness of subsequent assignments generated from the context schema
$\mathcal{C}$.

\begin{lemma}\label{lem:ctxmin}
  Let $\mathcal{C}$ be a context schema and let $G$ and $G'$ be
  context expressions such that $\ctxwftransport{G}{G'}$.
  Further, let $x:A$ be an assignment in a block declaration in
  $\mathcal{C}$.
  Then $G\subordmin{A} = G'\subordmin{A}$.
\end{lemma}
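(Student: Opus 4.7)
The plan is to proceed by induction on the derivation of $\ctxwftransport{G}{G'}$, dispatching one case per rule in Figure~\ref{fig:ctxwftransport}. The base case, where both $G$ and $G'$ are $\emptyce$, is immediate since $\emptyce\subordmin{A} = \emptyce$.

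For the second rule, $G = G_0, n\!:\!B$ and $G' = G_0', n\!:\!B$ for some block-declaration-extended context expressions $G_0$ and $G_0'$ satisfying $\ctxwftransport{G_0}{G_0'}$. Applying the induction hypothesis to the premise yields $G_0\subordmin{A} = G_0'\subordmin{A}$. The minimization rules then treat the trailing binding $n\!:\!B$ identically in both contexts: either $B\subord A$ and $n\!:\!B$ is appended on each side, or $B\not\subord A$ and it is dropped on each side. Hence $G\subordmin{A} = G'\subordmin{A}$.

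The substantive case is the third rule, where $G = G_0$ and $G' = G_0', n\!:\!B$, the side condition asserting that $B \not\subord A''$ for every $A''$ that is the type of some binding $y\!:\!A''$ appearing in a block declaration of $\mathcal{C}$. By assumption, $x\!:\!A$ is precisely such an assignment, so instantiating the side condition with $A''$ taken to be $A$ gives $B \not\subord A$. Therefore the last rule of context minimization applies, yielding $G'\subordmin{A} = G_0'\subordmin{A}$. Combining this with the induction hypothesis $G\subordmin{A} = G_0'\subordmin{A}$ gives the desired equality.

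There is no real obstacle here beyond lining up the side condition of the third pruning rule with what the minimization operation requires; the whole force of the lemma rests on the fact that the hypothesis restricts $A$ to be a type that already occurs in $\mathcal{C}$, so that the universally quantified side condition in Figure~\ref{fig:ctxwftransport} can be instantiated directly at $A$. No deeper analysis of the subordination relation itself is needed, and the structural induction goes through mechanically.
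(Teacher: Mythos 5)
Your proof is correct and follows essentially the same route as the paper, which simply states ``by induction on the structure of $G'$'' and omits the details; since the rules of Figure~\ref{fig:ctxwftransport} are syntax-directed on the right-hand context, your induction on the derivation is the same argument. The key step---instantiating the universally quantified side condition of the pruning rule at the type $A$ drawn from a block declaration of $\mathcal{C}$ to conclude $B \not\subord A$ and hence that minimization drops the binding---is exactly the point the paper relies on.
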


\begin{proof}
  By an induction on the structure of $G'$. We omit the details.
\end{proof}

\begin{figure*}[tbhp]
  \centering

  \begin{tabular}{cc}

    $ \Infer
        { \blockschemarelformula{\mathcal{B}'}{F}{\mathcal{C},\mathcal{B}} }
        { \mathcal{B}' = \{x_{1}:\alpha_{1}, \ldots, x_{n}: \alpha_{n}\}\,\Delta' \qquad
          \mathcal{B} = \{y_{1}:\alpha_{1}, \ldots, y_{k}: \alpha_{k}\}\,\Delta \qquad
          \blockwftransport[\mathcal{C}]{\Delta}{\Delta'} \qquad
          \cesub{\Gamma}{\Delta}{\Delta'}{F} } $

    \quad & \quad

    $ \Infer
        { \blockschemarelformula{\mathcal{B}'}{F}{\mathcal{C},\mathcal{B}} }
        { \blockschemarelformula{\mathcal{B}'}{F}{\mathcal{C}} } $

  \end{tabular}

%  \vspace{5pt}
  \rowspace{}\rowspace{}

  \begin{tabular}{cc}

    $ \Infer
        { \schemarelformula{\mathcal{C}}{F}{\mathcal{C}',\mathcal{B}} }
        { \schemarelformula{\mathcal{C}}{F}{\mathcal{C}'} \qquad
          \mathcal{B'}\ \text{is a variant of}\ \mathcal{B} \qquad
          \blockschemarelformula{\mathcal{B}'}{F}{\mathcal{C}} } $

    \qquad & \qquad

    $ \Infer
        { \schemarelformula{\mathcal{C}}{F}{\cdot} }
        { } $

  \end{tabular}

  \caption{A subsumption relation on context schemas}
  \label{fig:ctxsubsumption}
  \Description{Describe in the text.}
\end{figure*}

The next two lemmas together show that if $\mathcal{B}$ is a block
schema whose block declaration is a pruned version of that of
$\mathcal{B}'$ that also subsumes it, then there is an instance $G$ of
$\mathcal{B}$ that is a pruned version of $G'$ and subsumes $G'$ for
any context expression $G'$ that instantiates $\mathcal{B}'$.

\begin{lemma}\label{lem:relsbdinst}
  Let $\mathcal{C}$ be a context schema and let $\Delta$ and $\Delta'$
  be block declarations such that
  $\ctxsimfmla{\Delta}{\Delta'}{F}$ and
  $\ctxwftransport{\Delta}{\Delta'}$ are derivable.
  Further, let $\mathbb{N}$, $G'$ and $\theta'$ be such that
  $\declinst{\mathbb{N}}{\Delta'}{G'}{\theta'}$ is derivable.
  Then there must be some $G$ and $\theta$ such that
  $\declinst{\mathbb{N}}{\Delta}{G}{\theta}$,
  $\ctxsimfmla{G}{G'}{F}$, and
  $\ctxwftransport{G}{G'}$ are derivable.
\end{lemma}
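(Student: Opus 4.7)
The plan is to proceed by induction on the derivation of $\ctxwftransport{\Delta}{\Delta'}$, inverting the given derivation of $\declinst{}{\Delta'}{G'}{\theta'}$ at each step to expose the structure of $G'$ and $\theta'$, and building $G$ and $\theta$ that mirror the matching/pruning pattern dictated by the $\ctxwftransport$ derivation. The companion derivation of $\cesub{\Gamma}{\Delta}{\Delta'}{F}$ supplies the non-subordination side conditions needed to assemble the outer $\cesub$ and $\ctxwftransport$ conclusions on the constructed context expressions.

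In the base case, $\Delta = \Delta' = \emptyce$ and the $\declinst$ inversion forces $G' = \emptyce$ and $\theta' = \emptyset$, so we set $G = \emptyce$ and $\theta = \emptyset$. In the matching inductive step, $\Delta = \Delta_1, x:A$ and $\Delta' = \Delta'_1, x:A$ with premise $\ctxwftransport{\Delta_1}{\Delta'_1}$; the $\declinst$ inversion gives $G' = G'_1, n:\asub{A}{\theta'_1}$ and $\theta' = \theta'_1 \cup \{\langle x, n, \erase{A}\rangle\}$ with $\declinst{}{\Delta'_1}{G'_1}{\theta'_1}$. Applying the induction hypothesis yields $G_1, \theta_1$, which we extend to $G = G_1, n:\asub{A}{\theta_1}$ and $\theta = \theta_1 \cup \{\langle x, n, \erase{A}\rangle\}$, reusing the same nominal $n$; the outer $\declinst$, $\cesub$, and $\ctxwftransport$ conclusions follow by one application of the respective matching rules. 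In the pruning inductive step, $\Delta' = \Delta'_1, x:A$ and the premise is $\ctxwftransport{\Delta}{\Delta'_1}$ together with the subordination condition on $A$; applying the IH directly to $\Delta'_1$ produces $G, \theta$, and we extend the outer $\cesub$ and $\ctxwftransport$ conclusions by one application each of their pruning rules on the new binding $n:\asub{A}{\theta'_1}$, exploiting the fact that substitution preserves the head constant of a type so that the non-subordination conditions on $A$ lift to $\asub{A}{\theta'_1}$.

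The main obstacle is that the two input derivations need not be structured identically---the $\cesub$ derivation might, for instance, apply its pruning rule at a step where $\ctxwftransport$ applies its matching rule. The resolution rests on the fact that whether matching or pruning is applicable at the outermost level is ultimately dictated by whether $\Delta$ and $\Delta'$ share their final binding, supplemented by a small auxiliary claim---provable by a routine induction on the hypothesis---that $\cesub{\Gamma}{\Delta_1, x:A}{\Delta'_1}{F}$ together with $\nottfsubord{A}{F}{\Gamma}$ implies $\cesub{\Gamma}{\Delta_1}{\Delta'_1}{F}$, and analogously for $\ctxwftransport$. This allows the companion derivations to be pulled into the required shape at each inductive step. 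A secondary technical point arises in the matching case: verifying $\asub{A}{\theta_1} = \asub{A}{\theta'_1}$, so that the new binding of $G$ agrees in type with that of $G'$; this holds because the pruning condition of $\ctxwftransport$ precludes any variable of $\Delta'_1 \setminus \Delta_1$ from occurring in the kept type $A$, which appears in $\mathcal{C}$.
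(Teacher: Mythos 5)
Your proposal is correct and takes essentially the same route as the paper: the paper inducts on the derivation of $\declinst{}{\Delta'}{G'}{\theta'}$ while you induct on the derivation of $\ctxwftransport{\Delta}{\Delta'}$, but both simply track the structure of $\Delta'$, constructing $G$ by dropping the pruned bindings from $G'$ and restricting $\theta'$ accordingly, and both rest on the invariance of subordination under substitution to transfer the side conditions. Your explicit handling of the possible misalignment of the two input derivations and of the equality $\asub{A}{\theta_1}=\asub{A}{\theta'_1}$ in the matching case supplies details that the paper's two-sentence proof leaves implicit.
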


\begin{proof}
  By induction on the derivation of
  $\declinst{\mathbb{N}}{\Delta'}{G'}{\theta'}$; the generated $G$
  will lose some assignments to nominal constants present in $G'$ and
  the $\theta$ will correspondingly lose some substitutions from
  $\theta'$.
  The needed relationships between $G$ and $G'$ will follow easily
  from the given ones between $\Delta$ and $\Delta'$ and the fact that
  the subordination relation between type expressions is invariant
  under substitutions.
\end{proof}

\begin{lemma}\label{lem:relssubst}
  Let $\mathcal{C}$ be a context schema and let $G$ and $G'$ be
  context expressions such that   $\ctxsimfmla{G}{G'}{F}$ and
  $\ctxwftransport{G}{G'}$ are derivable.
  For any substitution $\theta$ if $\asub{G'}{\theta}$ is
  defined then so must $\asub{G}{\theta}$ be.
  Moreover, $\ctxsimfmla{\asub{G}{\theta}}{\asub{G'}{\theta}}{F}$
  and $\ctxwftransport{\asub{G}{\theta}}{\asub{G'}{\theta}}$
  must be derivable.
\end{lemma}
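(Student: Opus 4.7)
My plan is to proceed by induction on the structure of $G'$, dispatching the definedness claim and the preservation of both relations in a single induction. Throughout I exploit the parallel shape of $\ctxsimfmla{\cdot}{\cdot}{F}$ and $\ctxwftransport{\cdot}{\cdot}$: their derivations both exhibit $G$ as a ``subsequence'' of $G'$ obtained by selectively dropping bindings, and they share the same three rule patterns (empty, keep, drop). The base case $G' = \emptyce$ forces $G = \emptyce$, so $\asub{G}{\theta} = \emptyce$ is defined and both relations hold trivially.

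For the inductive step, suppose $G' = G'_0, n:A$. Because $\asub{G'}{\theta} = \asub{G'_0}{\theta}, n:\asub{A}{\theta}$ is defined by assumption, both $\asub{G'_0}{\theta}$ and $\asub{A}{\theta}$ are defined. I split into two subcases according to whether the outermost binding $n:A$ is kept or dropped in the given derivations. In the keep subcase, $G = G_0, n:A$, and inverting the two derivations yields $\ctxsimfmla{G_0}{G'_0}{F}$ together with $\ctxwftransport{G_0}{G'_0}$; the induction hypothesis then supplies the definedness of $\asub{G_0}{\theta}$ and the preservation of both relations at this smaller scale, and a single application of each keep rule extends everything to $G$ and $G'$ under $\theta$. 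In the drop subcase, $\ctxsimfmla{G}{G'_0}{F}$, $\ctxwftransport{G}{G'_0}$, $\nottfsubord{A}{F}{\Gamma}$, and $A \not\subord A'$ for every $y:A'$ appearing in $\mathcal{C}$ all hold; the induction hypothesis handles $G$ and $G'_0$, after which I must reapply the two drop rules to tack on the new assignment $n:\asub{A}{\theta}$.

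The crux is therefore to verify that the pruning side-conditions survive the substitution $\theta$. The main observation, already exploited in the proof of Theorem~\ref{thm:wf-ctx-atm-transport}, is that the head constant of a type is invariant under hereditary substitution, and that both the subordination relation of Definition~\ref{def:subordination} and, through its appearance in the atomic rule, the type-to-formula subordination of Figure~\ref{fig:tfsubord} depend only on the head constant of their subject type. Hence $\headof{\asub{A}{\theta}} = \headof{A}$ immediately yields $\nottfsubord{\asub{A}{\theta}}{F}{\Gamma}$ and $\asub{A}{\theta} \not\subord A'$ for every $y:A'$ in $\mathcal{C}$, which licenses the drop rules. I expect this head-invariance argument---made carefully enough to cover both the type-to-type subordination and its lifting to formulas---to be the only delicate point in what is otherwise a routine structural induction.
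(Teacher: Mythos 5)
Your proposal is correct and takes essentially the same route as the paper: the paper's (much terser) proof rests on exactly the two observations you isolate, namely that every type occurring in $G$ also occurs in $G'$ (giving definedness of $\asub{G}{\theta}$) and that subordination between types is unaffected by substitution (which you justify, correctly, via invariance of the head constant, thereby preserving the side conditions of both drop rules). Your explicit structural induction on $G'$ is just the natural elaboration of that sketch.
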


\begin{proof}
  The first property, that $\asub{G}{\theta}$ is defined, follows from
  observing that every type that appears in $G$ also appears in $G'$.
  The remaining properties follow from the fact that the subordination
  between types is not impacted by substitutions.
\end{proof}

We are now in a position to define the subsumption relation between
context schemas.

\begin{definition}
  The relation
  $\schemarelformula[\Gamma]{\mathcal{C}}{F}{\mathcal{C}'}$, to be
  read as ``the context schema $\mathcal{C}$ subsumes the context schema
  $\mathcal{C}'$ relative to the formula $F$ and the context variable
  $\Gamma$,'' holds exactly when it is derivable by virtue of the rules in
  Figure~\ref{fig:ctxsubsumption}.
\end{definition}

A key part of this definition is the relation
$\blockschemarelformula[\Gamma]{\mathcal{B}'}{F}{\mathcal{C}}$ which
matches block schemas in  $\mathcal{C}'$ with ones in $\mathcal{C}$.
Intuitively, a block schema from $\mathcal{C}'$ matches one from
$\mathcal{C}$ if bindings from a context expression that results from
it can be pruned to get a context expression from the matching block
schema, and the pruning will not impact the well-formedness of other
context expressions generated from block schemas in $\mathcal{C}$ or
typing judgements in $F$ whose context is determined by $\Gamma$.
The next two lemmas and culminating theorem formalize this intuition.

\begin{lemma}\label{lem:oneinsttransport}
  Let $F$ be a formula in which the context variable $\Gamma$
  possibly appears free and let $\mathcal{C}$ and $\mathcal{C}'$ be
  context schemas such that
  $\schemarelformula{\mathcal{C}}{F}{\mathcal{C}'}$ is derivable.
  Further, let $G'$ be a context expression such that
  $\csinstone{}{\mathcal{C}'}{G'}$ is derivable.
  Then there is a context expression $G$ such that
  $\csinstone{}{\mathcal{C}}{G}$,
  $\ctxsimfmla{G}{G'}{F}$, and
  $\ctxwftransport{G}{G'}$ are derivable.
\end{lemma}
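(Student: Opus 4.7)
The plan is to proceed by induction on the derivation of $\schemarelformula{\mathcal{C}}{F}{\mathcal{C}'}$, which tracks the structure of $\mathcal{C}'$. The base case $\mathcal{C}' = \cdot$ is vacuous, since no context expression satisfies $\csinstone{}{\cdot}{G'}$. In the step case, $\mathcal{C}' = \mathcal{C}'_0, \mathcal{B}$, and inversion of the derivation yields $\schemarelformula{\mathcal{C}}{F}{\mathcal{C}'_0}$, some variant $\mathcal{B}^v$ of $\mathcal{B}$, and $\blockschemarelformula{\mathcal{B}^v}{F}{\mathcal{C}}$. I then split on the derivation of $\csinstone{}{\mathcal{C}'}{G'}$: when $\csinstone{}{\mathcal{C}'_0}{G'}$ is derivable, the induction hypothesis immediately supplies the desired $G$; the real work is in the other subcase, where $\bsinst{}{\mathcal{B}}{G'}$ is derivable.

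In that subcase, I first apply Theorem~\ref{thm:variant-instance} to transport the instance to the variant, obtaining $\bsinst{}{\mathcal{B}^v}{G'}$. A secondary induction on the derivation of $\blockschemarelformula{\mathcal{B}^v}{F}{\mathcal{C}}$ then locates a block schema $\mathcal{B}^* \in \mathcal{C}$ whose block declaration $\Delta^*$ relates to the block declaration $\Delta^v$ of $\mathcal{B}^v$ by both $\blockwftransport[\mathcal{C}]{\Delta^*}{\Delta^v}$ and $\cesub{\Gamma}{\Delta^*}{\Delta^v}{F}$. Inverting $\bsinst{}{\mathcal{B}^v}{G'}$ yields an inner derivation $\declinst{}{\Delta^v}{G'_0}{\theta_0}$ and a suitably arity-typed parameter substitution $\rho$ with $G' = \asub{G'_0}{\rho}$. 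Lemma~\ref{lem:relsbdinst} delivers $G_0$ and $\theta$ with $\declinst{}{\Delta^*}{G_0}{\theta}$, $\ctxsimfmla{G_0}{G'_0}{F}$, and $\ctxwftransport{G_0}{G'_0}$. Lemma~\ref{lem:relssubst} then lifts the latter two relations through the application of $\rho$, producing $G = \asub{G_0}{\rho}$ together with $\ctxsimfmla{G}{G'}{F}$ and $\ctxwftransport{G}{G'}$. Since $\rho$ has the right arity typing for $\mathcal{B}^*$'s parameters, this exhibits $G$ as an instance of $\mathcal{B}^*$, and since $\mathcal{B}^* \in \mathcal{C}$, $\csinstone{}{\mathcal{C}}{G}$ follows by a chain of the $\csinstone$ rules.

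The main obstacle I expect is the parameter bookkeeping between $\mathcal{B}^v$ and $\mathcal{B}^*$. The rule defining $\blockschemarelformula$ requires only that the two schemas agree on the arity types of their parameters, but the variable names of the parameters and the names of the bindings they govern may differ between the two. Because the extension of $\cesub$ to block declarations retains only bindings whose names agree, the variant $\mathcal{B}^v$ must be chosen so as to rename both the parameters and the bound variables of $\mathcal{B}$ into alignment with those of $\mathcal{B}^*$; Theorem~\ref{thm:variant-instance} is what lets me do this renaming without disturbing the instance $G'$. Threading $\rho$ coherently through this alignment---ensuring it has the correct domain, that $\asub{G_0}{\rho}$ is well-defined, and that the resulting context expression is indeed recognized as an instance of $\mathcal{B}^*$---is the delicate step, and it is supported precisely by the substitution-invariance of subordination that was exploited in Lemmas~\ref{lem:relsbdinst} and~\ref{lem:relssubst}.
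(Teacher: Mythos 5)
Your proposal is correct and follows essentially the same route as the paper's proof: locate the matching block schema in $\mathcal{C}$ via the subsumption derivation, use Theorem~\ref{thm:variant-instance} to transfer the instance $G'$ to the variant, then apply Lemma~\ref{lem:relsbdinst} to the inner block-declaration instantiation and Lemma~\ref{lem:relssubst} to push the parameter substitution through. The only difference is presentational---you make explicit the (nested) inductions over the derivations of $\schemarelformula{\mathcal{C}}{F}{\mathcal{C}'}$ and $\blockschemarelformula{\mathcal{B}^v}{F}{\mathcal{C}}$ that the paper leaves as an inversion argument.
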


\begin{proof}
  Since $\csinstone{}{\mathcal{C}'}{G'}$, it must be
  the case that there is a derivation for
  $\bsinst{}{\mathcal{B}''}{G'}$ for some block schema
  $\mathcal{B}''$ in $\mathcal{C}'$.
  Since $\schemarelformula{\mathcal{C}}{F}{\mathcal{C}'}$ is
  derivable, there must be a variant $\mathcal{B}'$ of $\mathcal{B}''$
  of the form $\{y_1:\alpha_1,\ldots, y_n:\alpha_n\}\Delta'$ and
  a block schema $\mathcal{B}$ of the form
  $\{x_1:\alpha_1,\ldots, x_m:\alpha_m\}\Delta$ in $\mathcal{C}$ such
  that $\ctxsimfmla{\Delta}{\Delta'}{F}$ and
  $\ctxwftransport{\Delta}{\Delta'}$ are derivable.
  By Theorem~\ref{thm:variant-instance}, there must be a derivation
  for $\bsinst{}{\mathcal{B}'}{G'}$.
  But then it must be the case that there is a $G'_1$ and a $\rho'$ such
  that $\declinst{}{\Delta'}{G_1'}{\rho'}$ is derivable
  and a $\theta$ such that $G'= \asub{G'_1}{\theta}$.
  By Lemma~\ref{lem:relsbdinst}, there must be a context expression
  $G_1$ and a $\rho$ such that
  $\declinst{}{\Delta}{G_1}{\rho}$, $\ctxsimfmla{G_1}{G'_1}{F}$ and
  $\ctxwftransport{G_1}{G'_1}$ are derivable.
  Using these facts and Lemma~\ref{lem:relssubst}, it is easily
  verified that $\asub{G_1}{\theta}$ is defined and that
  $\bsinst{}{\mathcal{B}}{\asub{G_1}{\theta}}$ and,
  hence, $\csinstone{}{\mathcal{C}}{\asub{G_1}{\theta}}$ is derivable.
  Letting $G$ be $\asub{G_1}{\theta}$ and using
  Lemma~\ref{lem:relssubst} again, we can check that the requirements
  of the lemma hold.
\end{proof}

\begin{lemma}\label{lem:joinctxs}
  Let $F$ be a formula, let $\Gamma$ be a context variable
  that possibly appears in $F$, and let $\mathcal{C}$ and $\mathcal{C}'$ be
  context schemas such that
  $\schemarelformula{\mathcal{C}}{F}{\mathcal{C}'}$ is derivable.
  Let $G$ and $G'$ be context expressions such that $\lfwfctx{G}$,
  $\lfwfctx{G'}$, $\ctxwftransport{G}{G'}$ and $\ctxsimfmla{G}{G'}{F}$
  are derivable.
  Let $G'_1$ be a context expression such that $\lfwfctx{G',G'_1}$ is
  derivable.
  Let $G_1$ be a context expression such that for every assignment of
  the form $n:A$ in $G_1$ it is the case that $A$ is a substitution
  instance of some type $B$ such that an assignment of the form $x:B$
  appears in one of the block declarations in $\mathcal{C}$.
  Further, let $\ctxsimfmla{G_1}{G'_1}{F}$ and
  $\ctxwftransport{G_1}{G'_1}$ be derivable.
  Then $\lfwfctx{G,G_1}$;
  $\ctxsimfmla{G,G_1}{G',G'_1}{F}$; and
  $\ctxwftransport{G,G_1}{G',G'_1}$ must be derivable.
\end{lemma}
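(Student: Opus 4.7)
The plan is to induct on the derivation of $\ctxwftransport{G_1}{G'_1}$, which is equivalent to inducting on the structure of $G'_1$. The base case, where $G'_1 = G_1 = \cdot$, is immediate: the three conclusions collapse to the given hypotheses for $G$ and $G'$. For the inductive step with $G'_1 = H'_1,\,n\!:\!A$, two subcases arise depending on whether the last rule in the derivation of $\ctxwftransport{G_1}{G'_1}$ was ``keep'' or ``drop''; inversion on $\ctxsimfmla{G_1}{G'_1}{F}$ will mirror this split.

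The drop subcase is routine. Here $G_1 = H_1$ with $H_1 = G_1$, and inversion yields $\ctxwftransport{G_1}{H'_1}$, $\ctxsimfmla{G_1}{H'_1}{F}$, $\nottfsubord{A}{F}{\Gamma}$, and the side condition that $A \not\subord A'$ for every type $A'$ appearing in $\mathcal{C}$. Invoking the induction hypothesis with $H'_1$ in place of $G'_1$, using that $\lfwfctx{G',H'_1}$ follows from $\lfwfctx{G',G'_1}$, yields the three target judgements for $G,G_1$ against $G',H'_1$. The well-formedness conclusion is already what we need; the subsumption and pruning conclusions are then obtained by appending the drop rules of Figures~\ref{fig:cesub} and~\ref{fig:ctxwftransport} to handle the binding $n\!:\!A$.

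The keep subcase is where the real work lies: $G_1 = H_1,\,n\!:\!A$, and inversion supplies $\ctxwftransport{H_1}{H'_1}$ and $\ctxsimfmla{H_1}{H'_1}{F}$. The induction hypothesis yields $\lfwfctx{G,H_1}$ together with the corresponding subsumption and pruning derivations. The critical task is to strengthen $\lfwfctx{G,H_1}$ to $\lfwfctx{G,H_1,\,n\!:\!A}$, which requires establishing $\lfwftype{G,H_1}{A}$. From $\lfwfctx{G',G'_1}$ we extract $\lfwftype{G',H'_1}{A}$. By assumption, $A = \asub{B}{\sigma}$ for some $B$ appearing as the type in a binding of a block declaration in $\mathcal{C}$; since head constants are invariant under substitution, $\headof{A} = \headof{B}$, so the argument of Lemma~\ref{lem:ctxmin} applies to $A$ and yields $(G,H_1)\subordmin{A} = (G',H'_1)\subordmin{A}$. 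Two applications of Proposition~\ref{thm:trans-canon-form}---first descending to the minimized context on the $G',H'_1$ side and then ascending on the $G,H_1$ side---transport the kinding derivation to produce $\lfwftype{G,H_1}{A}$. Distinctness of $n$ from the nominal constants in $G,H_1$ follows because pruning only deletes bindings, so the nominals of $G,H_1$ are a subset of those of $G',H'_1$. Appending the keep rules of the two figures then extends the inductive derivations to $G,G_1$ against $G',G'_1$.

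The main obstacle is precisely this well-formedness preservation in the keep case. Lemma~\ref{lem:ctxmin} is phrased for types appearing literally in a block declaration of $\mathcal{C}$, whereas the types in $G_1$ appear only as substitution instances of such schema types. The resolution exploits the fact that subordination is defined through head constants, and substitution does not alter the head of a type, so the reasoning underlying Lemma~\ref{lem:ctxmin} goes through for $A$ just as it would for the schema type $B$ from which it was instantiated.
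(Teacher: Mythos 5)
Your proposal is correct and follows essentially the same route as the paper's proof: induction on the structure of $G'_1$ (equivalently the pruning derivation), a split on whether the binding $n\!:\!A$ is kept or dropped, and, in the keep case, the use of Lemma~\ref{lem:ctxmin}, Proposition~\ref{thm:trans-canon-form}, and the invariance of type heads under substitution to transport $\lfwftype{G',G'_2}{A}$ to the pruned context. Your closing remark about extending Lemma~\ref{lem:ctxmin} from schema types to their substitution instances via head invariance is exactly the observation the paper relies on.
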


\begin{proof}
  By induction on the structure of $G'_1$.
  If $G'_1$ is $\emptybb$, the conclusion follows directly from the
  assumptions.
  Suppose, then, that $G'_1$ is of the form $G'_2, n : A$.
  Note that it must be the case that $\lfwfctx{G',G'_2}$.
  Now, $G_1$ could either include the assignment for $n$ or not. We
  consider each of these cases separately below.

  Let us first consider the case where $G_1$ includes the assignment
  for $n$, \ie, when it is of the form $G_2, n : A$.
  Here, it must be the case that $\ctxsimfmla{G_2}{G'_2}{F}$ and
  $\ctxwftransport{G_2}{G'_2}$ are derivable.
  By the induction hypothesis, $\ctxsimfmla{G,G_2}{G',G'_2}{F}$;
  $\ctxwftransport{G,G_2}{G', G'_2}$; and $\lfwfctx{G,G_2}$ must be
  derivable.
  The derivability of the first two judgements implies
  that the judgements
  $\ctxsimfmla{G,G_2,n:A}{G',G'_2,n:A}{F}$ and
  $\ctxwftransport{G,G_2,n:A}{G', G'_2,n:A}$ must be derivable.
  Since there is a derivation for $\lfwfctx{G',G'_2,n:A}$, it must be
  the case that $n$ does not appear in $G',G'_2$ and that
  $\lfwftype{G',G'_2}{A}$ has a derivation.
  Clearly, $n$ must not appear in $G,G_2$ either since this
  comprises a subcollection of the bindings in $G',G'_2$.
  From Lemma~\ref{lem:ctxmin}, we see that $(G,G_2)\subordmin{B}$ is
  identical to $(G',G'_2)\subordmin{B}$ for any type $B$ such that
  $x:B$ appears in a block declaration in $\mathcal{C}$.
  Using Proposition~\ref{thm:trans-canon-form} and the fact that the
  head of a type remains unchanged under substitution, it follows that
  $\lfwftype{G,G_2}{A}$ must have a derivation if there is one for
  $\lfwftype{G',G'_2}{A}$.
  That $\lfwfctx{G,G_2, n: A}$ has a derivation is an easy
  consequence.

  The argument in the case when $G_1$ does not include the assignment
  for $n$ is simpler and, in fact, follows immediately from the
  induction hypothesis.
  We omit the details.
\end{proof}

\begin{theorem}\label{thm:subsumes}
  Let $F$ be a formula in \logic, let $\Gamma$ be a context variable
  that possibly appears in $F$, and let $\mathcal{C}$ and $\mathcal{C}'$ be
  context schemas such that
  $\schemarelformula{\mathcal{C}}{F}{\mathcal{C}'}$ is derivable.
  Let $G'$ be a context expression such that
  $\csinst{}{\mathcal{C}'}{G'}$ and
  $\lfwfctx{G'}$ are derivable.
  Then there is a context expression $G$ such that
  $\csinst{}{\mathcal{C}}{G}$, $\lfwfctx{G}$,
  $\ctxsimfmla{G}{G'}{F}$, and $\ctxwftransport{G}{G'}$ are derivable.
\end{theorem}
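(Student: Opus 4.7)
The plan is to induct on the derivation of $\csinst{}{\mathcal{C}'}{G'}$, using Lemma~\ref{lem:oneinsttransport} to handle each single-block extension and Lemma~\ref{lem:joinctxs} to glue together the constructed pieces while preserving the four required properties.

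In the base case, the derivation ends with the axiom giving $\csinst{}{\mathcal{C}'}{\emptyce}$, and we simply pick $G$ to be $\emptyce$; all four conclusions follow immediately, since $\csinst{}{\mathcal{C}}{\emptyce}$ has a one-step derivation, $\lfwfctx{\emptyce}$ is trivial, and both $\ctxsimfmla{\emptyce}{\emptyce}{F}$ and $\ctxwftransport{\emptyce}{\emptyce}$ hold by their respective base rules. For the inductive step, the derivation must decompose $G'$ as $G'_0, G'_1$ with $\csinst{}{\mathcal{C}'}{G'_0}$ and $\csinstone{}{\mathcal{C}'}{G'_1}$ both derivable. Since $\lfwfctx{G'_0, G'_1}$ is derivable, its prefix $\lfwfctx{G'_0}$ is as well, so the induction hypothesis applied to $G'_0$ yields a context expression $G_0$ for which $\csinst{}{\mathcal{C}}{G_0}$, $\lfwfctx{G_0}$, $\ctxsimfmla{G_0}{G'_0}{F}$ and $\ctxwftransport{G_0}{G'_0}$ are all derivable. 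Next I apply Lemma~\ref{lem:oneinsttransport} to $G'_1$ to produce $G_1$ satisfying $\csinstone{}{\mathcal{C}}{G_1}$, $\ctxsimfmla{G_1}{G'_1}{F}$, and $\ctxwftransport{G_1}{G'_1}$.

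To combine $G_0$ and $G_1$ into the desired $G$ via Lemma~\ref{lem:joinctxs}, I need to verify its side condition that every type assigned in $G_1$ is a substitution instance of some type $B$ for which an assignment $x:B$ appears in a block declaration of $\mathcal{C}$. This follows by inspection of the rules generating $\csinstone{}{\mathcal{C}}{G_1}$ and $\bsinst{}{\mathcal{B}}{G_1}$: the context expression $G_1$ is produced by taking the types of some block declaration in $\mathcal{C}$ and applying the substitution that instantiates the parameters of the block schema. With this observation in hand, Lemma~\ref{lem:joinctxs} supplies derivations of $\lfwfctx{G_0, G_1}$, $\ctxsimfmla{G_0, G_1}{G'_0, G'_1}{F}$, and $\ctxwftransport{G_0, G_1}{G'_0, G'_1}$. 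Finally, combining $\csinst{}{\mathcal{C}}{G_0}$ with $\csinstone{}{\mathcal{C}}{G_1}$ via the concatenation rule for $\csinst{}{}{}$ in Figure~\ref{fig:ctx-inst} yields $\csinst{}{\mathcal{C}}{G_0, G_1}$, so taking $G$ to be $G_0, G_1$ discharges the induction.

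The main obstacle I expect to encounter is verifying the side condition required by Lemma~\ref{lem:joinctxs} about the shapes of types in $G_1$; although it is essentially immediate from how block schema instantiation is defined, its proper handling requires tracing the substitution $\theta$ that realizes the instantiation through the rules in Figure~\ref{fig:ctx-inst}. A secondary subtlety is that when invoking the induction hypothesis one must be careful about how $\lfwfctx{G'_0, G'_1}$ supplies $\lfwfctx{G'_0}$, so that the hypothesis on the prefix is genuinely available; this follows from the standard prefix-closure property of context well-formedness in LF. Once these points are in place, the remaining derivations of $\ctxsimfmla{}{}{}$, $\ctxwftransport{}{}$, and the instantiation judgement are routine consequences of the stated lemmas.
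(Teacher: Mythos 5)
Your proposal is correct and follows essentially the same route as the paper's proof: induction on the derivation of $\csinst{}{\mathcal{C}'}{G'}$, with Lemma~\ref{lem:oneinsttransport} handling the single-block piece and Lemma~\ref{lem:joinctxs} gluing it onto the context obtained from the induction hypothesis. Your explicit verification of the side condition of Lemma~\ref{lem:joinctxs} (that the types in $G_1$ are substitution instances of types assigned in block declarations of $\mathcal{C}$) is a detail the paper leaves implicit, and you handle it correctly by tracing the block schema instantiation rules.
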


\begin{proof}
  By an induction on the derivation of
  $\csinst{}{\mathcal{C}'}{G'}$.

  In the base case, $G'$ is $\emptyce$.
  The argument in this case is trivial: we simply observe that
  $\csinst{}{\mathcal{C}}{\emptyce}$ and then
  verify that the other three requirements are met by setting $G$ to
  $\emptyce$.

  Now suppose that $G'$ is of the form $G'_1, G'_2$ where
  $\csinst{}{\mathcal{C'}}{G'_1}$ and
  $\csinstone{}{\mathcal{C}'}{G'_2}$ are derivable;
  note that, from the assumption that $\lfwfctx{G'}$ it follows that
  $\lfwfctx{G'_1}$.
  By Lemma~\ref{lem:oneinsttransport} there is a context expression
  $G_2$ such that $\csinstone{}{\mathcal{C}}{G_2}$,
  $\ctxsimfmla{G_2}{G'_2}{F}$ and $\ctxwftransport{G_2}{G'_2}$ are
  derivable.
  By the induction hypothesis, there is a context expression $G_1$
  such that
  $\csinst{}{\mathcal{C}}{G_1}$, $\lfwfctx{G_1}$,
  $\ctxsimfmla{G_1}{G'_1}{F}$, and $\ctxwftransport{G_1}{G'_1}$ are
  derivable.
  From what we have at hand, it is clear that
  $\csinst{}{\mathcal{C}}{G_1,G_2}$ has a
  derivation.
  Let us now take $G$ to be $G_1, G_2$.
  All the pieces are then at hand for us to invoke
  Lemma~\ref{lem:joinctxs} and thereby complete the proof.
\end{proof}

Theorems~\ref{thm:subsumes} and \ref{thm:wf-ctx-atm-transport}
together ensure that our subsumption relation captures the property we
want of it: if $\schemarelformula{\mathcal{C}}{F}{\mathcal{C}'}$
holds, then corresponding to every well-formed instance $G'$ of
$\mathcal{C}'$ there is a well-formed instance $G$ of $\mathcal{C}$
such that $\ctxsub{F_1}{G'/\Gamma}$ is valid exactly when
$\ctxsub{F_1}{G/\Gamma}$ is valid, where $F_1$ is an instance of the
formula $F$ in which at most the variable $\Gamma$ appears free.

\begin{figure*}[tbhp]
  \centering

  \begin{tabular}{cccc}
    $ \Infer
        { \illformvalidant{\Gamma}{\bot} }
        { } $

    \qquad
    &
    \qquad

    $ \Infer
        { \illformvalidant{\Gamma}{\fatm{G}{M}{A}} }
        { \text{$\Gamma$ appears in } G } $

    \qquad
    &
    \qquad

    $ \Infer
        { \illformvalidant{\Gamma}{F_1 \supset F_2} }
        { \illformvalidcons{\Gamma}{F_1} \qquad
          \illformvalidant{\Gamma}{F_2} } $

    \qquad
    &
    \qquad

    $ \Infer
        { \illformvalidant{\Gamma}{F_1 \lor F_2} }
        { \illformvalidant{\Gamma}{F_1} \qquad
          \illformvalidant{\Gamma}{F_2} } $

  \end{tabular}

  \rowspace{}\rowspace{}

  \begin{tabular}{ccc}

    $ \Infer
        { \illformvalidant{\Gamma}{F_1 \land F_2} }
        { \illformvalidant{\Gamma}{F_i} } \;\,i \in \{1,2\}\, $

    \qquad
    &
    \qquad

    $ \Infer
        { \illformvalidant{\Gamma}{\mathcal{Q}x\!:\!\alpha.\,F'} }
        { \illformvalidant{\Gamma}{F'} } \;\,\mathcal{Q} \in \{\forall,\exists\}\, $

    \qquad
    &
    \qquad

    $ \Infer
        { \illformvalidant{\Gamma}{\fctx{\Gamma'}{\mathcal{C}}{F'}} }
        { \illformvalidant{\Gamma}{F'} } \;\,\Gamma' \neq \Gamma\, $

  \end{tabular}

  %  \rowsectionspace{}\rowspace{}\rowspace{}
    \rowspace{}\rowspace{}\rowspace{}

  \begin{tabular}{cccc}

    $ \Infer
        { \illformvalidcons{\Gamma}{\top} }
        { } $

    \qquad
    &
    \qquad

    $ \Infer
        { \illformvalidcons{\Gamma}{\fdisjunct{F_1}{F_2}} }
        { \illformvalidcons{\Gamma}{F_i} } \;\,i \in \{1,2\}\, $

    \qquad
    &
    \qquad

    $ \Infer
        { \illformvalidcons{\Gamma}{F_1 \land F_2} }
        { \illformvalidcons{\Gamma}{F_1} \qquad
          \illformvalidcons{\Gamma}{F_2} } $

    \qquad
    &
    \qquad

    $ \Infer
        { \illformvalidcons{\Gamma}{F_1 \supset F_2} }
        { \illformvalidant{\Gamma}{F_1} } $

  \end{tabular}

  \rowspace{}\rowspace{}

  \begin{tabular}{ccc}

    $ \Infer
        { \illformvalidcons{\Gamma}{F_1 \supset F_2} }
        { \illformvalidcons{\Gamma}{F_2} } $

    \qquad
    &
    \qquad

    $ \Infer
        { \illformvalidcons{\Gamma}{\mathcal{Q}x\!:\!\alpha.\,F'} }
        { \illformvalidcons{\Gamma}{F'} } \;\,\mathcal{Q} \in \{\forall,\exists\}\, $

    \qquad
    &
    \qquad

    $ \Infer
        { \illformvalidcons{\Gamma}{\fctx{\Gamma'}{\mathcal{C}}{F'}} }
        { \illformvalidcons{\Gamma}{F'} } \;\,\Gamma' \neq \Gamma\, $

  \end{tabular}

  \caption{Structural analysis for formula validity under an ill-formed substitution}
  \label{fig:illformed-validity}
  \Description{Described in text.}
\end{figure*}

\section{Transporting Theorems Across Context Schemas}\label{sec:rule-and-soundness}

The analysis in the previous section goes only part of the way towards
determining when we can transport a theorem of the form
$\fctx{\Gamma}{\mathcal{C}}{F}$ into one where the quantification of
the context variable is governed by a different schema $\mathcal{C}'$:
we still need to consider the validity of $F$ when $\Gamma$ is
substituted for by an ill-formed instance of $\mathcal{C}'$ for
$\Gamma$.
An approach based on subordination that is similar to the one in
Section~\ref{sec:subsumption} will not work in this situation:
removing parts of a context expression could well render the result
well-formed.
In light of this, we use a different idea: we identify a structural
property for formulas that ensures that they must be valid under
ill-formed substitutions for a context variable $\Gamma$.
We describe this property in the first subsection below and then
combine it with the results of the previous section to yield
a sound proof rule for transporting theorems across context schemas.

\subsection{Validity Under an Ill-formed Context Variable
  Substitution}
\label{sec:illformed-validity}

By the definition of validity for closed atomic formulas, a formula of
the form $\fatm{G}{M}{A}$ is not valid if $G$ is an ill-formed
context expression.
When such a formula appears negatively, it can provide the basis for
concluding that the overall formula must be valid.
We must, of course, pay attention to the overall propositional
structure of the formula to leverage this observation.
Figure~\ref{fig:illformed-validity} presents rules for inferring the
relation $\illformvalidcons{\Gamma}{F}$ and the relation
$\illformvalidant{\Gamma}{F}$ that embody this kind of structural
analysis of the formula $F$.
The following theorem quantifies the intent of such an analysis.

\begin{theorem}\label{thm:illformed-validity}
  Let $F$ be a formula that is well-formed with respect to the type
  assignments $\Psi$ and $\Xi$ to term and context variables
  respectively.
  Further, let $\theta$ be a closed substitution for all the
  variables assigned types by $\Psi$ that is such that
  $\asub{F}{\theta}$ is defined, let $\Gamma$ be a context variable
  that possibly appears free in $F$, and let $\sigma$ be a closed
  substitution for all context variables other than $\Gamma$ that are
  assigned types by $\Xi$.
  Finally, let $G$ be a closed context expression such that
  $\lfwfctx{G}$ is not derivable. Then

  \begin{enumerate}
    \item $\ctxsub{\actxsub{F}{\theta}{\sigma}}{G/\Gamma}$ is valid if
    $\illformvalidcons{\Gamma}{F}$ is derivable.
    \item $\ctxsub{\actxsub{F}{\theta}{\sigma}}{G/\Gamma}$ is not
      valid if $\illformvalidant{\Gamma}{F}$ is derivable.
  \end{enumerate}
\end{theorem}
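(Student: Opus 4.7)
The plan is to prove the two parts simultaneously by a mutual induction on the derivations of $\illformvalidcons{\Gamma}{F}$ and $\illformvalidant{\Gamma}{F}$, which is equivalent to induction on the structure of $F$. Mutual induction is forced by the implication rules in Figure~\ref{fig:illformed-validity}: deriving $\illformvalidcons{\Gamma}{F_1 \supset F_2}$ from the antecedent invokes $\illformvalidant{\Gamma}{F_1}$, and symmetrically the $\illformvalidant$ rule for implication invokes $\illformvalidcons{\Gamma}{F_1}$. For each derivation rule I would read off the conclusion directly from the corresponding clause of Definition~\ref{def:validity}, observing that $\ctxsub{\actxsub{\cdot}{\theta}{\sigma}}{G/\Gamma}$ distributes over the propositional connectives and the quantifiers (after the usual renaming to keep bound variables fresh).

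The crucial base case is $\illformvalidant{\Gamma}{\fatm{G'}{M}{A}}$ when $\Gamma$ appears in $G'$. The grammar for context expressions places $\Gamma$ only at the leftmost position, so $G'$ must have the shape $\Gamma, n_1\!:\!A_1,\ldots,n_k\!:\!A_k$; applying the substitutions yields a context that has the ill-formed $G$ as a prefix. Because the formation rules for $\lfwfctx{\cdot}$ can only extend an already well-formed prefix, $\lfwfctx{\ctxsub{\cdots}{G/\Gamma}}$ cannot be derivable, and by the first clause of Definition~\ref{def:validity} the substituted atomic formula is invalid. The base cases for $\top$ and $\bot$ are immediate, and the propositional rules (for $\land$, $\lor$, $\supset$) are discharged by a direct appeal to the induction hypothesis applied to the appropriate subformula, matched against the relevant clause of validity.

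For the term quantifiers $\fgeneric{x\!:\!\alpha}{F'}$, I would unfold the definition of validity and apply the induction hypothesis to $F'$ with $\theta$ extended by $\asubentry{x}{M}{\alpha}$ for an arbitrary (or a chosen) arity-respecting closed $M$; this yields either validity or non-validity of every instance, which suffices for both $\forall$ and $\exists$ under the rules. For the context quantifier $\fctx{\Gamma'}{\mathcal{C}}{F'}$ (with $\Gamma' \ne \Gamma$), I would extend $\sigma$ with $G''/\Gamma'$ for $G''$ ranging over instances of $\mathcal{C}$; to witness non-validity of the whole formula it suffices to produce a single instance, which is available since $\csinst{}{\mathcal{C}}{\emptyce}$ is always derivable by the rules in Figure~\ref{fig:ctx-inst}.

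I do not expect any part of this argument to be genuinely difficult; it is essentially bookkeeping. The main point requiring care is the context-quantifier case, where one must verify that extending $\sigma$ does not disturb the fact that $G$ remains ill-formed (it does not, since $\sigma$ does not act on $\Gamma$) and that the chosen witnesses respect the nominal-constant freshness conventions. The other subtle point is ensuring the mutual induction is well-founded, which it is because both relations strictly decrease the size of $F$ on every premise in Figure~\ref{fig:illformed-validity}.
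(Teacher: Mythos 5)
Your proposal is correct and follows essentially the same route as the paper, which proves both parts simultaneously by induction on the structure of $F$ (equivalently, on the derivations of the two mutually defined relations), reading each case off the definition of validity. The details you supply---the atomic base case exploiting that a context variable can only occur leftmost so the ill-formed $G$ becomes a prefix, and the use of the empty context as a witness instance in the negative $\Pi$-quantifier case---are exactly the bookkeeping the paper's one-line proof leaves implicit.
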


\begin{proof}
  The two parts can be proved simultaneously by an
  induction on the structure of the formula $F$, using the definition
  of validity.
\end{proof}

\subsection{A Proof Rule for Transporting Theorems}

The desired proof rule for transporting theorems proved under one
context schema to another context schema can now be presented.
This is done in Figure~\ref{fig:proof-rule}.

\begin{figure}[htbp]
\begin{center}
\begin{tabular}{c}
$ \Infer
    { \seq{\Psi}{\Xi}{\Omega}{\fctx{\Gamma}{\mathcal{C}'}{F}} }
    { \seq{\Psi}{\Xi}{\Omega}{\fctx{\Gamma}{\mathcal{C}}{F}} \qquad
      \schemarelformula{\mathcal{C}}{F}{\mathcal{C}'} \qquad
      \illformvalidcons{\Gamma}{F} } $
\end{tabular}
\end{center}
\caption{The Transportation Proof Rule}\label{fig:proof-rule}
\end{figure}

The soundness of the proof rule is the content of the
following theorem. 
\begin{theorem}\label{thm:schema-transport}
  Suppose that
  $\seq{\Psi}{\Xi}{\Omega}{\fctx{\Gamma}{\mathcal{C}}{F}}$ is a
  well-formed sequent and that $\mathcal{C}'$ is well-formed context
  schema, such that $\schemarelformula{\mathcal{C}}{F}{\mathcal{C}'}$
  is derivable.
  Further, let $\illformvalidcons{\Gamma}{F}$ be derivable.
  If $\seq{\Psi}{\Xi}{\Omega}{\fctx{\Gamma}{\mathcal{C}}{F}}$ is
  a valid sequent, then the sequent
  $\seq{\Psi}{\Xi}{\Omega}{\fctx{\Gamma}{\mathcal{C}'}{F}}$ must also
  be valid.
\end{theorem}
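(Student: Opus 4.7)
The plan is to unfold the definition of sequent validity and then dispatch the two cases of a well-formed or an ill-formed instantiation of $\Gamma$ using the two main technical results already established, namely Theorems~\ref{thm:subsumes} and~\ref{thm:wf-ctx-atm-transport} in the well-formed case and Theorem~\ref{thm:illformed-validity} in the ill-formed case.

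First I would fix an arbitrary closed instance of the sequent $\seq{\Psi}{\Xi}{\Omega}{\fctx{\Gamma}{\mathcal{C}'}{F}}$. Such an instance is obtained by picking a closed, arity-respecting substitution $\theta$ for the term variables in $\Psi$ and a closed, schema-respecting context substitution $\sigma$ for the context variables in $\Xi$. Since the rule does not alter $\Omega$, the very same $\theta$ and $\sigma$ produce a matching closed instance of $\seq{\Psi}{\Xi}{\Omega}{\fctx{\Gamma}{\mathcal{C}}{F}}$. By the assumed validity of the latter, either some assumption in $\actxsub{\Omega}{\theta}{\sigma}$ is not valid—in which case the closed instance of the conclusion sequent is immediately valid—or $\actxsub{(\fctx{\Gamma}{\mathcal{C}}{F})}{\theta}{\sigma}$ is valid. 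In the interesting case, I need to show that $\actxsub{(\fctx{\Gamma}{\mathcal{C}'}{F})}{\theta}{\sigma}$ is also valid.

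Appealing to Definition~\ref{def:validity}, this reduces to showing that $\ctxsub{\actxsub{F}{\theta}{\sigma}}{G'/\Gamma}$ is valid for every closed $G'$ with $\csinst{}{\mathcal{C}'}{G'}$ derivable. I would split on whether $\lfwfctx{G'}$ is derivable. If it is not, then since $\illformvalidcons{\Gamma}{F}$ is derivable by hypothesis, Theorem~\ref{thm:illformed-validity} immediately gives the validity of $\ctxsub{\actxsub{F}{\theta}{\sigma}}{G'/\Gamma}$. If it is, then Theorem~\ref{thm:subsumes}, applied with the hypothesis $\schemarelformula{\mathcal{C}}{F}{\mathcal{C}'}$, yields a context expression $G$ with $\csinst{}{\mathcal{C}}{G}$, $\lfwfctx{G}$, and $\cesub{\Gamma}{G}{G'}{F}$ all derivable. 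From the validity of $\actxsub{(\fctx{\Gamma}{\mathcal{C}}{F})}{\theta}{\sigma}$ I obtain the validity of $\ctxsub{\actxsub{F}{\theta}{\sigma}}{G/\Gamma}$, and then Theorem~\ref{thm:wf-ctx-atm-transport} transports this validity across the subsumption $\cesub{\Gamma}{G}{G'}{F}$ to give the validity of $\ctxsub{\actxsub{F}{\theta}{\sigma}}{G'/\Gamma}$, as required.

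I do not expect any genuine obstacle beyond bookkeeping: the structural work is already packaged inside the three invoked theorems. The most delicate point is ensuring that the closed instance of the outer sequent lines up cleanly with the closed instance of the inner formula $F$ used in Theorems~\ref{thm:wf-ctx-atm-transport} and~\ref{thm:illformed-validity}. This is handled by choosing the same $\theta$ and $\sigma$ on both sides of the rule and by noting that neither the subsumption relation $\schemarelformula{\mathcal{C}}{F}{\mathcal{C}'}$ nor the syntactic judgement $\illformvalidcons{\Gamma}{F}$ is affected by substitutions for the free term and context variables of $F$ other than $\Gamma$, so the hypotheses of the invoked theorems apply to $\actxsub{F}{\theta}{\sigma}$ exactly as they applied to $F$.
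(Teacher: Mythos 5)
Your proposal is correct and follows essentially the same route as the paper: the paper's proof is a one-line appeal to the definition of sequent validity together with Theorems~\ref{thm:subsumes}, \ref{thm:wf-ctx-atm-transport}, and \ref{thm:illformed-validity}, and your argument is exactly the expansion of that appeal, with the case split on well-formedness of $G'$ dispatched by the same three results in the same roles. Your closing remark about the hypotheses $\schemarelformula{\mathcal{C}}{F}{\mathcal{C}'}$ and $\illformvalidcons{\Gamma}{F}$ being stated on the uninstantiated $F$ (which is precisely how Theorems~\ref{thm:wf-ctx-atm-transport} and~\ref{thm:illformed-validity} are phrased) correctly handles the only delicate bookkeeping point.
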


\begin{proof}
  Follows from the definition of validity for sequents, using
  Theorems~\ref{thm:subsumes}, \ref{thm:wf-ctx-atm-transport}, and
  \ref{thm:illformed-validity}.
\end{proof}

\section{Using the Proof Rule in Reasoning}\label{sec:applications}

We first consider how the proof rule can be used to realize the
style of reasoning sketched in the introduction.
Our interest is in showing that every term has an associated size.
Assuming the LF declarations in Figure~\ref{fig:size-sig}, this
property is stated in \logic{} by the following formula:
\[
  \fall{\typeof{M}{o}}{}\fimp{\fatm{\cdot}{M}{\tm}}{}
  \fexists{\typeof{N}{o}}{}\fexists{\typeof{D}{\oty}}{}
  \fatm{\cdot}{D}{\size{M}{N}}.
\]
This formula can be proved by an induction over the
structure of the representation of an untyped lambda term.
Noting the way a typing judgement is derived for the
representation of an abstraction term, we see that we will actually
need to prove the following generalized version of the
formula that allows for non-empty typing contexts:
\[
  \fctx{\Gamma}{\mathcal{C}'}{}
  \fall{\typeof{M}{o}}{}\fimp{\fatm{\Gamma}{M}{\tm}}{}
  \fexists{\typeof{N}{o}}{}\fexists{\typeof{D}{\oty}}{}
  \fatm{\Gamma}{D}{\size{M}{N}},
\]
Here, $\mathcal{C}'$ is the context schema $\{\}(\typeof{x}{\tm},
  \typeof{y}{\size{x}{(\s\ \z)}})$.
The abstraction case utilizes weakening to introduce an entry for the size of
the associated term in the context of the antecedent, which preserves the height
of the derivation.
The case then follows easily from the induction hypothesis under this generalization.
In the application case, i.e., when $M$ has the form
$\tmapp\app M_1\app M_2$, the induction hypothesis allows us to
conclude that there are terms $D_1$, $D_2$, $N_1$ and $N_2$ such that
$\fatm{\Gamma}{D_1}{\size{M_1}{N_1}}$ and
$\fatm{\Gamma}{D_2}{\size{M_2}{N_2}}$ have derivations.
The derivability of these formulas also implies that there are
derivations for $\fatm{\Gamma}{N_1}{\nat}$ and
$\fatm{\Gamma}{N_2}{\nat}$.
Thus, if $\fctx{\Gamma}{\mathcal{C}'}{F}$ where $F$ is the formula
  \begin{tabbing}
    \qquad\=\qquad\qquad\qquad\=\kill
   \> $\fall{\typeof{N_1}{\oty}}
            {\fall{\typeof{N_2}{\oty}}
                  {\fimp{\fatm{\Gamma}{N_1}{\nat}}
                        {\fimp{\fatm{\Gamma}{N_2}{\nat}}}}}$\\
     \>\>$\fexists{\typeof{N_3}{\oty}}{\fexists{\typeof{D}{\oty}}
         {\fatm{\Gamma}{D}
         {\plus{N_1}{N_2}{N_3}}}}$
  \end{tabbing}
  were derivable, then, using the $\sizeapp$ constant, we would be
  able to provide terms $D_3$ and $N_3$ such that
  \begin{tabbing}
  \qquad\=\kill
  \>$\fatm{\Gamma}{D_3}{\size{(\tmapp\app M_1\app M_2)}{(\s\app
      N_3)}}$
  \end{tabbing}
  is derivable.

The proof obligation that we are left with looks much like the formula
in Example~\ref{ex:plus-exist}.
However, the context schema governing the quantification in the two cases is
different: in the formula in Example~\ref{ex:plus-exist}, which is what we would
expect to find in a ``library'' formalization of natural numbers, the
quantification is governed by the context schema $\{\}()$.
We could prove the property afresh with the new context schema but,
from an efficiency and modularity perspective, it is desirable to
obtain the conclusion directly from the library version.
We can use the proof rule for transporting theorems towards this end
provided we are able to show that $\illformvalidcons{\Gamma}{F}$ and
$\schemarelformula{\mathcal{C}}{F}{\mathcal{C}'}$ have derivations,
where $\mathcal{C}$ is the context schema $\{\}()$.
Both properties can be checked by a mechanizable process: for the
latter, the LF specifications must be analyzed to conclude
that
$\tm \not\subord \nat$, $\tm \not\subord \plussym$,  $\sizesym
\not\subord \nat$, and $\sizesym \not\subord \plussym$ hold.

We have modified the Adelfa proof assistant to incorporate support for
this kind of reasoning.
Rather than providing a separate tactic for the proof rule, we have
incorporated it into the \emph{apply} tactic.
This tactic utilizes premises $F_1,\ldots,F_n$ together with a
``lemma'' of the form
$\Pi\overline{\Gamma:\mathcal{C}}
   \fall{\overline{x:\alpha}}{\fimp{F'_1}
                             {\ldots\fimp{F'_n}{F'}}}$
to infer a suitable instance of the formula $F'$.
The matching of the premise formulas with instances of the antecedent
formulas in the lemma may require the adjustment of the context schema
and this is done automatically based on our proof rule in a new
version of the \emph{apply} tactic.

What we have considered above is the paradigmatic use
we envisage for our rule.
LF specifications often employ general relations (such as
\plussym) in the definition of specific relations (such as \sizesym).
Our rule permits theorems about the general relations proved without
reference to where they may be used to be drawn upon in more specific
reasoning tasks.
We sketch three other examples drawn from the literature to illustrate
such applications; complete proof developments can be found at the
website mentioned in the introduction.
The first concerns the task of showing that the size of a lambda term
does not decrease under substitution~\cite{twelf.website}.
The proof of this theorem must be relativized to typing contexts that
arise in assessing the size of a term.
An argument that is inductive on the structure of the term being
substituted into requires the following property about natural
numbers for the application case: if $n_1$ and $n_2$ are less than or
equal to $n'_1$ and $n'_2$ respectively, then $n_1 + n_2$ is less than
or equal to $n'_1 + n'_2$.
The typing context would be empty in an independent statement of this
property, but the gap can be filled using our
rule.
The second example is a ``benchmark'' property presented
in~\cite{felty15jar} that concerns showing declarative equality for
lambda terms implies algorithmic equality.
The difference between these two notions is that the former explicitly
builds in reflexivity, symmetry and transitivity.
These properties are of independent interest for algorithmic equality
and should be proved separately for it.
The proof rule for transporting theorems allows us to lift these
results to the context needed in establishing the desired equivalence.
The third example is a solution to problem 1a of the {\sc PoplMark}
challenge~\cite{aydemir05tphols} that emulates one provided using the
Twelf system~\cite{ashley-rollman05poplmark}.
This problem concerns proving transitivity and narrowing for the subtyping
relation in System $\texttt{F}_{<:}$ when this relation is specified
algorithmically.
For technical reasons that are orthogonal to the concerns in this
paper, it is most convenient to show these properties first using a
context schema with multiple block schemas.
However, it is desirable to present the final result relative to a
context schema with a single block schema.
The transportation mechanism that is supported by the proof rule
developed in this paper allows for this change to be realized
automatically.
This example is interesting because it involves the use of context
schema subsumption where one of the context schemas comprises multiple
block schemas.
Additionally, the example features the use of our ideas in a situation
where the context quantification is not at the outermost level in the
formula.

The most natural presentation for properties that do not involve a
typing context is to omit the context quantification.
Thus, the preferred presentation for the theorem about the existence
of a sum for two natural numbers is perhaps the following:
  \begin{tabbing}
    \qquad\=\qquad\qquad\qquad\=\kill
    \>$\fall{\typeof{N_1}{\oty}}
            {\fall{\typeof{N_2}{\oty}}
                  {\fimp{\fatm{\cdot}{N_1}{\nat}}
                        {\fimp{\fatm{\cdot}{N_2}{\nat}}}}}$\\
     \>\>${\fexists{\typeof{N_3}{\oty}}{\fexists{\typeof{D}{\oty}}{\fatm{\cdot}{D}{\plus{N_1}{N_2}{N_3}}}}}$
  \end{tabbing}
The use of such presentations can be supported without sacrificing the ability
to benefit from the transportation rule by essentially building the ``context
generalization'' into the \emph{apply} tactic.
We support this ability in our implementation.

\section{Conclusion}\label{sec:conclusion}

In this paper, we have described a method for transporting theorems
about LF specifications that have been proved relative to contexts
adhering to one schematic description to a situation where they adhere
to a different such description.
This method is based on the idea of context schema subsumption that
uses in an intrinsic way the previously described notion of
subordination for LF types~\cite{harper07jfp,virga99phd}.
We have developed a sound proof rule around context schema subsumption
and have illuminated its use through particular
reasoning examples.
We have also discussed the incorporation of the rule into the
Adelfa proof assistant.

In~\cite{felty15arxiv}, Felty \etal{} have described a collection of
benchmarks for systems that use higher-order abstract syntax in
reasoning tasks.
The problems addressed in this paper fall into the category they
describe as treating ``non-linear context extensions.''
The method we have developed in fact treats problems that need a
``generalized context'' approach under their definition.
The complementary category is that encompassed by a ``relational''
style of reasoning, where it is necessary to couple different contexts
appearing in a formula through a relation.
Supporting this approach within the context of \logic{} is a matter that
is currently under study.

Two other systems that are in existence that support reasoning about
LF specifications are Twelf~\cite{pfenning99cade} and
Beluga~\cite{pientka10ijcar}.
Both systems are based on the idea of describing functions whose
inputs and outputs are determined by LF typing judgements: by
exhibiting a function of this type that an external checker determines
to be total, one obtains a ``proof'' of a theorem that has a
$\forall\exists$ structure.
In Twelf, all theorems include an implicit universal quantification
over a context variable at the outermost level.
The domain of this quantifier is determined by a ``world'' description
that is similar to our context schemas and all the typing judgements
in the theorem are assumed to be relativized to the context that
instantiates the quantifier.
In contrast, Beluga allows contexts to be indicated explicitly with the
typing judgements.
Twelf allows for transportation of theorems between different world
descriptions when these are related by a property called \emph{world
subsumption}~\cite{harper07jfp} that, intuitively, encodes the idea
that all the contexts corresponding to one world description are
covered, relative to the theorem to be proved, by those corresponding
to the other.
The notion of context equivalence used is, in a sense, a
bi-directional version of the context expression subsumption that we
have described in Section~\ref{ssec:cssubsumption}.
However, by exploiting the particular way in which theorems are stated
in Twelf, context equivalence can be based directly on subordination
by a type and the context minimization operation described in
Section~\ref{ssec:subord}.
World subsumption can also use subordination directly, but how exactly
it is to be determined is left unspecified in the development
in~\cite{harper07jfp}.
Beluga, on the other hand, appears to support a functionality similar
to the one we have developed here.
However, we have not been able to find a formal development
of the idea and what is described in \cite{felty15jar} (see item 2 on
page 322) seems to lead to unsound reasoning: when trying to transport a
theorem of the form $\fctx{\Gamma}{\mathcal{C}}{F}$ into one of the
form $\fctx{\Gamma}{\mathcal{C}'}{F}$, it is necessary to check that
every block schema of $\mathcal{C}'$ is covered (in a suitable sense)
by one of $\mathcal{C}$, rather than the other way around as is
indicated in \cite{felty15jar}.\footnote{Experiments with version 1.1.2 of the
Beluga system indicate that what is implemented in the system is indeed the
unsound rule.}

In summary, the ideas we have investigated here have been considered
previously by others. However, our work differs from that manifest in
prior systems in that it has been carried out in the framework of a
logic that allows for explicit quantification over contexts and
that supports a more flexible statement of relationships between LF
typing judgements.
Moreover, we have developed a fully automated approach to
detecting context schema subsumption and have utilized the setting of
the logic to show the soundness of the transportation process that we
have based on this notion.

\begin{acks}
  This paper has benefited from the careful reading and the helpful
  comments of its reviewers.
\end{acks}

%%% -*-BibTeX-*-
%%% Do NOT edit. File created by BibTeX with style
%%% ACM-Reference-Format-Journals [18-Jan-2012].

\end{document}